\documentclass[journal]{IEEEtran}
\usepackage{amsmath,amssymb,amsthm,mathtools}
\usepackage{microtype}
\usepackage{subcaption}
\usepackage{placeins}
\usepackage{stfloats}
\usepackage{float}
\usepackage{cite}
\usepackage[colorlinks=false]{hyperref}
\newtheorem{lemma}{Lemma}
\newtheorem{theorem}{Theorem}

\newtheorem{proposition}{Proposition}

\begin{document}
\title{Density control of large-scale multi-agent systems via tuning of agent interactions}

\author{%
Mena~Zkhary
and~Alain~Boldini%
\thanks{M. Zkhary and A. Boldini are with the Department of Mechanical Engineering, New York Institute of Technology, Old Westbury, NY 11568 USA.}
\thanks{Corresponding author: Alain Boldini (email: aboldini@nyit.edu).}
}
\maketitle
\begin{abstract}
Collective behavior in biological systems emerges through adaptive interactions among units, enabling large populations to coordinate without centralized control. This mechanism raises the question of whether collective behavior in engineered multi-agent systems can likewise be regulated by modifying the interaction law between agents, rather than by directly controlling individual units. To answer this question, we consider large-scale multi-agent systems evolving on a ring and treat the interaction kernel as a time-varying control variable whose evolution regulates the collective dynamics. Using a mean-field formulation, we develop a framework for the analysis and design of kernel-based controllers. We derive explicit conditions for controllability in application-relevant scenarios, and synthesize a state-feedback controller to regulate the kernel dynamics, validated on the microscopic, discrete dynamics. Numerical simulations demonstrate the approach's effectiveness and robustness to unmodeled perturbations. This work constitutes a first step in the design of bio-inspired control strategies for large-scale multi-agent systems by modulation of interactions between agents.
\end{abstract}
\begin{IEEEkeywords}
Density control, Fourier analysis, interaction-kernel modulation, mean-field systems, multi-agent systems.
\end{IEEEkeywords}
\section{Introduction}
%
%
%
%
Large-scale multi-agent systems (MASs) have emerged as a foundational paradigm to describe and control the collective behavior of large populations of autonomous agents in a wide range of engineering and scientific applications, including swarm robotics, intelligent transportation systems, biological collective behavior, and cyber--physical systems \cite{Brambilla2013,Ren2008,Murray2007,Vicsek2012,Rajkumar2010}. Through local interactions among individual agents, these systems exhibit complex collective behaviors that arise without centralized coordination. Understanding, predicting, and regulating such emergent behaviors have become fundamental challenges in control theory, particularly as the scale and complexity of MASs continue to grow.
%
%
%

Most models of populations of interacting agents are microscopic, as they describe the evolution of the dynamics of each agent through a set of coupled ordinary differential equations (ODEs). As the number of agents increases, these models become increasingly difficult to analyze, simulate, and use for control. Analogous to kinetic theory, which derives macroscopic fluid equations from the statistics of microscopic particle interactions, continuification establishes a systematic bridge between microscopic agent models and continuum mean-field descriptions of large populations. Through continuification, collective behavior is described in terms of the macroscopic density of the agents, along with other continuum fields. The evolution of these fields is governed by partial differential equations (PDEs), yielding scalable mathematical models that accurately capture the emergent macroscopic dynamics in the large-population limit \cite{Topaz2006,Carrillo2010,Bernoff,Leverentz,Nikitin}.
%
%
%

Building upon these mean-field models, substantial progress was made in recent years in the analysis and control of large-scale MASs. Mean-field games and mean-field optimal control established rigorous mathematical frameworks for distributed optimization, collective decision-making, and control of continuum interacting systems \cite{Lasry2007,Bensoussan2013,Carmona2018,Fornasier2014,Albi2017}. More recently, continuification-based approaches exploited mean-field representations to develop scalable frameworks for controller synthesis and density regulation in large-scale MASs \cite{Maffettone,Nikitin,Boldini,LeaderFollower,Plasticity,block2026continuification,giardina2026robotectonics}. Complementing these theoretical developments, advances in numerical methods for mean-field control significantly improved the computational tractability of large-scale interacting systems \cite{Hambly,Lauriere2021,Nurbekyan,Reisinger}. Collectively, these developments greatly expanded the analytical and computational tools available for understanding and regulating collective behavior using mean-field descriptions.
%
%
%

Existing control strategies regulate collective behavior through external control actions on individual agents, while treating the interaction law governing the collective dynamics
as fixed\cite{Maffettone,Boldini,LeaderFollower,Plasticity}. However, in many biological and human collective systems, coordinated behaviors emerge from adaptive local interactions rather than externally imposed control actions. For example, during predator avoidance, only a small subset of preys directly perceives the threat and responds to it by modulating its interactions with the predator and with other preys, such that coordinated escape responses rapidly propagate throughout the swarm without centralized coordination~\cite{Mateo2017,BouffanaisBook,Couzin2005}. Similarly, pedestrians continuously adapt their interactions with nearby individuals in response to their visual perception and the surrounding environment, giving rise to collective navigation and coordinated crowd motion \cite{Warren2022,Pan2025,Colas2022}. These observations highlight the fundamental role of modulation of local interactions in shaping collective behavior and motivate investigating whether systematically tuning the interaction law between agents can provide an alternative mechanism for regulating large-scale MASs.

The interaction modulation paradigm remains largely unexplored in mean-field control frameworks, even though related concepts have
been previously explored. In particular, we highlight two main lines of work that share similarities with interaction modulation.
%
%
%
 Controlled-Lagrangian and potential-shaping methods modify the system energy landscape to achieve desired closed-loop behaviors in mechanical systems, but are difficult to apply in large-scale MASs~\cite{Bloch2000,Bloch2001}. On the other hand, in mean-field synchronization models, optimal control strategies can tune the strength of prescribed  interaction terms, but do not modify the form of interactions between agents \cite{Sinigaglia2022}. In contrast to these existing approaches, the proposed interaction modulation framework treats the interaction kernel itself as the control variable, allowing its functional form to be systematically modified in order to regulate the collective dynamics.
%
%

To demonstrate the potential of interaction modulation, in this paper, we develop a Fourier-mode framework for regulating the collective dynamics of large-scale MASs evolving on a ring. Upon continuification of the microscopic model, the framework exploits physically relevant approximations to build a finite-dimensional description of the system dynamics, explicitly relating interaction modulation to the evolution of the collective motion. The proposed framework treats the interaction law as a controllable design variable, providing a mathematically tractable foundation for controller synthesis based on modulation of the kernel of interactions (Fig. \ref{fig:framework}). 
Upon designing the kernel modulation at the macroscopic level, modified interactions are implemented on individual agents through discretization.
To demonstrate the viability of the proposed framework, we investigate the controllability of the resulting interaction-modulated system by deriving explicit analytical conditions under which kernel modulation can regulate the dynamics near prescribed equilibrium configurations. These results provide theoretical insight into the capability and limitations of interaction-based control.
\begin{figure}[!tb]
    \centering
    \includegraphics[width=\columnwidth]{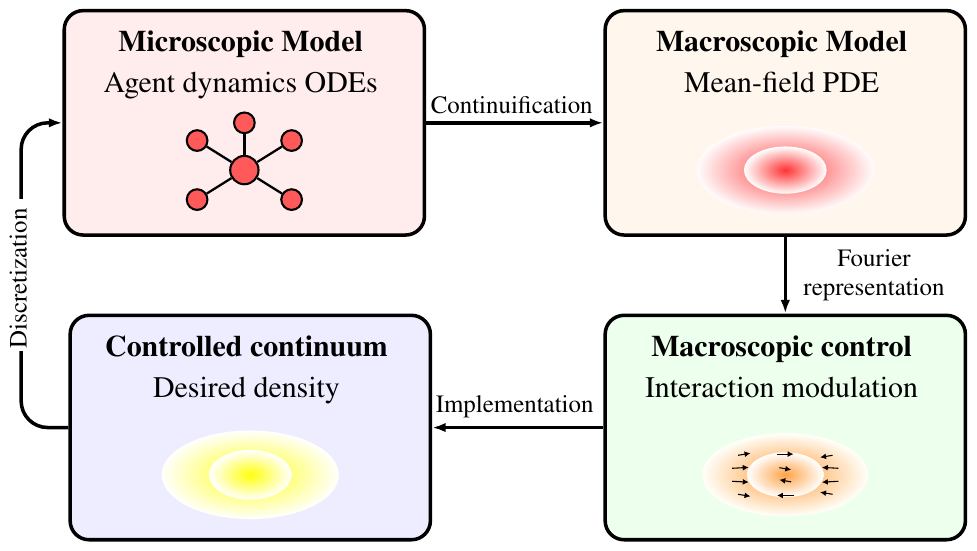}
    \caption{Overview of the proposed micro--macro interaction-based control framework. The microscopic interacting-agent system is transformed into a continuum mean-field model via continuification. At the macroscopic level, Fourier analysis is employed to modify the interaction kernel. The resulting interaction law is discretized to regulate the microscopic collective dynamics.}
    \label{fig:framework}
\end{figure}

The main contributions of this paper are summarized as follows.

\begin{enumerate}

\item We introduce an interaction-based control framework for mean-field models of large-scale MASs. Rather than relying on conventional external control inputs, the proposed framework regulates collective behavior through modulation of the interaction kernel.

\item We establish conditions for the kernel to make a desired density distribution an equilibrium. Next, we develop a state-feedback controller to regulate the linearized system dynamics near such an equilibrium, wherein modifications of the kernel serve as control inputs. We characterize the performance and robustness of such a controller under varying conditions.

\item We investigate the controllability properties of the linearized finite-dimensional approximation of the system dynamics in physically relevant conditions, with kernel modifications as control inputs. We establish explicit conditions under which the controllability of the dynamics around prescribed equilibrium densities can be guaranteed.

\end{enumerate}

The remainder of this paper is organized as follows. Section~\ref{sec:II} introduces the mathematical framework for interaction-based control and develops the state-feedback controller based on kernel modulation. Section~\ref{sec:controllability} investigates the controllability properties of the system controlled via interaction modulation. Section~\ref{numerical_results} demonstrates the proposed framework through numerical studies on both continuum and microscopic models. Section~\ref{sec:results} summarizes the main results of the paper and discusses future research directions.
\section{Mathematical Framework and Problem Formulation}
\label{sec:II}
\subsection{Microscopic and mean-field models}

We consider a swarm of \(N\) interacting agents moving on the one-dimensional unit circle
\(
\mathbb{S}^{1}
\),
parametrized by the angular position \(x\in[-\pi,\pi)\). We denote the position of agent \(i\) at time \(t\ge 0\) as
\(x_i(t)\in\mathbb{S}^{1}\). Following the widely adopted first-order kinematic model for collective motion \cite{Maffettone,Topaz2006,Carrillo2010}, the microscopic dynamics is described by a set of coupled ODEs,
\begin{equation}
\dot{x}_i(t)
=
\sum_{j=1}^{N}
K_{\mathrm{np}}\!\left([x_i(t),x_j(t)]_{\pi},t\right),
\qquad
i=1,\ldots,N,
\label{eq:micro}
\end{equation}
where
$
[x_i,x_j]_{\pi}
=
(x_i-x_j+\pi)\bmod(2\pi)-\pi$
denotes the wrapped angular difference between agents' positions, mapping the relative angular position to the interval
\([-\pi,\pi)\), and
\(K_{\mathrm{np}}:\mathbb{R}\times\mathbb{R}_{\ge0}\rightarrow\mathbb{R}\)
is a kernel describing pairwise interactions among the agents. Consistently with our approach, we consider the possibility of modulating the kernel by making it an explicit function of time.

Designing changes to the kernel in the microscopic dynamics is challenging for large-scale swarms. We thus adopt the corresponding continuified formulation, in which the swarm is described by its density
\(
\rho(x,t):\mathbb{S}^1\times \mathbb{R}_{\ge 0} \rightarrow \mathbb{R}_{\ge 0}
\). 
The evolution of the density is governed by the continuity equation
\begin{equation}
\frac{\partial\rho(x,t)}{\partial t}
+
\frac{\partial}{\partial x}
\left[
\rho(x,t)v(x,t)
\right]
=
0.
\label{eq:pde}
\end{equation}
Here, the velocity field $v:\mathbb{S}^{1}\times \mathbb{R}_0^+\rightarrow \mathbb{R}$ is found from the convolution 
\begin{equation}
v(x,t)
=
(K_\mathrm{np}*\rho)(x,t)
=
\int_{-\pi}^{\pi}
K(x-y,t)\rho(y,t)\,\mathrm{d}y,
\label{eq:velocity}
\end{equation}
where ``$*$'' denotes the convolution operator on
\(\mathbb{R}\), and $K\in L^2(\mathbb{S}^1)$ is the periodicized version of the non-periodic kernel $K_\mathrm{np}$ \cite{Boldini}. This operation is equivalent to the circular convolution of the periodicized kernel $ K$ with the density $\rho$ on $\mathbb{S}^1$. Both the density and the kernel are assumed to satisfy the periodicity conditions 
$
\rho(-\pi,t)
=
\rho(\pi,t),
K(-\pi,t)
=
K(\pi,t).
$
Without lack of generality, the density can be rescaled to satisfy a unitary mass normalization condition,
\begin{equation}
\int_{-\pi}^{\pi}
\rho(x,t)\,\mathrm{d}x
=
1.
\label{eq:mass}
\end{equation}
\subsection{Problem formulation}
The objective of this paper is to regulate the collective dynamics of the mean-field system by appropriately modifying the kernel $K(x,t)$, rather than by introducing external control inputs on the agents. Specifically, given a prescribed equilibrium density $\rho^{\mathrm d}(x)$ on the periodic domain $\mathbb{S}^{1}$, we seek to find a kernel $K(x,t)$ such that the density of the MASs asymptotically converges to the desired density almost everywhere, that is,
\begin{equation}
\lim_{t\rightarrow\infty}
\left\|
\rho(\cdot,t)-\rho^{\mathrm{d}}(\cdot)
\right\|_{L^2}
=0,
\label{eq:error_norm}
\end{equation}
provided that the initial density $\rho(x,0) = \rho_0(x)$ is close enough to $\rho^{\mathrm{d}}(x)$. Here, $||\cdot||_{L^2}$ represents the $L^2$ norm on $\mathbb{S}^1$. This problem resembles previous efforts in the literature~\cite{Maffettone}, but leverages the kernel $K(x,t)$ rather than an external input $u(x,t)$ as the control parameter in Eqs.~\eqref{eq:pde}--\eqref{eq:velocity}.
\subsection{Fourier representation}
Due to periodicity, the density field and the kernel admit a Fourier-series representation, 
\begin{subequations}\label{eq:fourier_expansions}
\begin{align}
\rho(x,t)
&=
\sum_{m=-\infty}^{\infty}
\hat{\rho}_m(t)e^{\mathrm{i}mx},
\label{eq:rho_complex}
\\
K(x,t)
&=
\sum_{m=-\infty}^{\infty}
\hat{K}_m(t)e^{\mathrm{i}mx},
\label{eq:kernel_complex}
\end{align}
\end{subequations}
where $\mathrm{i}$
denotes the imaginary unit, while
\(
\hat{\rho}_m(t)
\)
and
\(
\hat{K}_m(t)
\)
are the complex Fourier coefficients of the density and kernel, respectively. Since
\(
\rho(x,t)
\) and \(
K(x,t)
\)
are real-valued, the Fourier coefficients satisfy the Hermitian symmetry
$\hat{\rho}_{-m}(t)=\overline{\hat{\rho}_{m}(t)}$
and
$\hat{K}_{-m}(t)=\overline{\hat{K}_{m}(t)}$,
for every $m\in\mathbb{Z}$.
Further, mass normalization in
\eqref{eq:mass}
implies
\begin{equation}
\hat{\rho}_0
=
\frac{1}{2\pi}.
\label{eq:rho0}
\end{equation}

Substituting the Fourier expansion
\eqref{eq:fourier_expansions}
into the velocity in
\eqref{eq:velocity},
and applying the convolution theorem \cite{Grafakos} yields 
\begin{equation}
v
=
2\pi
\sum_{m=-\infty}^{\infty}
\hat{K}_{m}(t)
\hat{\rho}_{m}(t)
e^{\mathrm{i}mx},
\label{eq:velocity_complex}
\end{equation}
from which
\begin{equation}
\rho v
=
2\pi
\sum_{m,n=-\infty}^{\infty}
\hat{\rho}_{m}(t)
\hat{\rho}_{n}(t)
\hat{K}_{n}(t)
e^{\mathrm{i}(m+n)x}.
\label{eq:rho_v}
\end{equation}
Substituting
\eqref{eq:rho_v} into the continuity equation \eqref{eq:pde} and applying orthogonality conditions yields 
\begin{equation}
\dot{\hat{\rho}}_{k}(t)
=
-2\pi \mathrm{i} k
\sum_{n=-\infty}^{\infty}
\hat{\rho}_{k-n}(t)
\hat{\rho}_{n}(t)
\hat{K}_{n}(t),
\qquad
k\in\mathbb{Z}.
\label{eq:modal_dynamics}
\end{equation}
Equation \eqref{eq:modal_dynamics} provides an exact
infinite-dimensional representation of the continuity
equation \eqref{eq:pde} in terms of the evolution of the Fourier coefficients of density and kernel, capturing the nonlinear coupling among Fourier modes.

\subsection{Design of equilibrium kernels}

We seek to design an interaction kernel $K^\mathrm{d}(x)$ that makes a desired density $\rho^\mathrm{d}(x)$ an equilibrium of \eqref{eq:modal_dynamics}. We denominate $ K^\mathrm{d}$ an \textquotedblleft{}equilibrium kernel\textquotedblright{}. Let
\(
\{\hat{\rho}^\mathrm{d}_{k}\}_{k\in\mathbb Z}
\)
denote the Fourier coefficients of the desired equilibrium density, which satisfy
$\dot{\hat{\rho}}^{\mathrm d}_{k}=0$ $(k\in\mathbb{Z})$ with corresponding equilibrium velocity 
$v^{\mathrm d}(x)=0$ $(x\in\mathbb{S}^{1})$. Let $\left\{\hat K^\mathrm{d}_k\right\}_{k\in\mathbb{Z}}$ be the Fourier coefficients of an equilibrium kernel for density $\rho^\mathrm{d}$. 
\begin{proposition}[Design of equilibrium kernels]
\label{prop:modal_equilibrium}
Let $\{\hat{K}^\mathrm{d}_k\}_{k\in \mathbb{Z}}$ be the Fourier coefficients of an equilibrium kernel for a desired density with Fourier coefficients
\(
\{\hat{\rho}^\mathrm{d}_{k}\}_{k\in\mathbb Z}
\). Then
\begin{equation}
\hat K^\mathrm{d}_{k}\hat{\rho}^\mathrm{d}_{k}=0,\qquad k\in \mathbb{Z}.
\label{eq:modal_equilibrium}
\end{equation}
\end{proposition}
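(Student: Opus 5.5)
The approach is to use the way an equilibrium kernel was defined just before the statement. There, $\rho^{\mathrm d}$ is an equilibrium with the corresponding equilibrium velocity $v^{\mathrm d}(x)=0$ for every $x\in\mathbb{S}^1$. The plan is to express this zero-velocity condition in Fourier coefficients through the convolution representation \eqref{eq:velocity_complex}, and then read off the claim mode by mode.

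The first step is to substitute $K^{\mathrm d}$ and $\rho^{\mathrm d}$, both time-independent, into \eqref{eq:velocity_complex}. This gives
\begin{equation*}
v^{\mathrm d}(x)=2\pi\sum_{m=-\infty}^{\infty}\hat K^{\mathrm d}_m\hat\rho^{\mathrm d}_m e^{\mathrm{i}mx}.
\end{equation*}
Since $K^{\mathrm d}\in L^2(\mathbb{S}^1)$ and $\rho^{\mathrm d}\in L^1(\mathbb{S}^1)$, Young's inequality places the circular convolution in $L^2(\mathbb{S}^1)$. The series above is therefore its $L^2$-convergent Fourier series, and the coefficients $2\pi\hat K^{\mathrm d}_m\hat\rho^{\mathrm d}_m$ are exactly the Fourier coefficients of $v^{\mathrm d}$.

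The second step imposes $v^{\mathrm d}\equiv 0$. By uniqueness of Fourier coefficients in $L^2$ (Parseval), this forces $\hat K^{\mathrm d}_k\hat\rho^{\mathrm d}_k=0$ for every $k\in\mathbb Z$. Concretely, one multiplies by $e^{-\mathrm{i}kx}$, integrates over $[-\pi,\pi)$, and uses orthogonality. This is \eqref{eq:modal_equilibrium}.

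A natural objection is that stationarity alone only says $\partial_x(\rho^{\mathrm d}v^{\mathrm d})=0$, i.e.\ the flux $\rho^{\mathrm d}v^{\mathrm d}$ is constant. The proposition rests on the stronger condition $v^{\mathrm d}=0$, which the paper adopts as part of what it means to be an equilibrium kernel, so I would invoke it explicitly. As a consistency check, I would also verify from \eqref{eq:modal_dynamics} that \eqref{eq:modal_equilibrium} implies $\dot{\hat\rho}^{\mathrm d}_k=0$: every summand $\hat\rho^{\mathrm d}_{k-n}\hat\rho^{\mathrm d}_n\hat K^{\mathrm d}_n$ vanishes, so zero velocity and the modal equilibrium agree.

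The main obstacle is only this interpretive point, which is justified by the definition. The convergence issues are settled by the Young/$L^2$ argument in the first step.
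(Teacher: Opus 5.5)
Your proposal is correct and follows the paper's argument: impose $v^{\mathrm d}\equiv 0$, apply the convolution theorem, and read off $\hat K^{\mathrm d}_k\hat\rho^{\mathrm d}_k=0$ mode by mode. Two of your additions are sound refinements that the paper leaves implicit: the Young's-inequality justification of the $L^2$ Fourier series, and the remark that the operative hypothesis is the zero-velocity condition built into the definition, since stationarity alone only forces constant flux.
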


\begin{proof}
At equilibrium, the convolution of $K^\mathrm{d}$ and $\rho^\mathrm{d}$, generating $v^\mathrm{d}$, must be zero. By the convolution theorem \cite{Grafakos}, the corresponding product of the Fourier coefficients of $K^\mathrm{d}$ and $\rho^\mathrm{d}$ must be zero, providing condition ~\eqref{eq:modal_equilibrium}.
\end{proof}
Equation~\eqref{eq:modal_equilibrium} establishes a fundamental
complementarity relationship between the Fourier coefficients of
the equilibrium density and those of an equilibrium
kernel. Note that infinite equilibrium kernels are possible. Equilibrium kernels with larger norms correspond to stronger interactions between agents. This modal complementarity plays a central role in our theory, providing a way to design kernels that make desired density distributions to be equilibria solutions of ~\eqref{eq:modal_dynamics}.

\subsection{Finite-dimensional, reduced-order dynamic model}

The nonlinear modal dynamics~\eqref{eq:modal_dynamics} constitutes an
infinite-dimensional system involving all Fourier modes.
However, only a finite spatial bandwidth can be meaningfully estimated
and controlled in practical conditions. First, for sufficiently smooth density fields, spatial variations are dominated by
low-wavenumber Fourier modes, with contributions of higher-order modes rapidly decaying with the wavenumber, according to well-known convergence theorems of the Fourier series of smooth functions
\cite{Trefethen2000}. Second, density fields are typically
reconstructed from discrete agent locations using kernel density
estimation, which naturally attenuates
high-wavenumber components \cite{Silverman1986,Scott2015}, acting as a spatial filter. Finally,
the finite spatial resolution of sensing and communication, along with the size of the robots, impose an
upper bound on the resolvable spatial wavenumbers, according to
classical sampling theory \cite{Oppenheim2010}. In this vein, practical applications are typically concerned only with low-wavenumber modes of the density, rather than with sharp density variations related to higher-order modes. Consequently, we approximate the nonlinear modal dynamics in~\eqref{eq:modal_dynamics} to a finite spatial bandwidth,  preserving the
dominant interaction dynamics relevant for analysis and controller
design. Accordingly, let
$\mathcal{M}_0=\{-N,\ldots,-1,0,1,\ldots,N\}$
denote the set of retained Fourier modes, where
$N\in\mathbb{N}$ is the truncation order. Let
$
\mathcal M
=
\{-N,\ldots,-1,1,\ldots,N\}
$
denote the set of retained non-zero Fourier modes. 
The density and kernel are approximated as
\begin{subequations}
\label{eq:truncated_expansions}
\begin{align}
\rho_N(x,t)
&\approx
\hat{\rho}_0
+
\sum_{k\in\mathcal{M}}
\hat{\rho}_k(t)e^{\mathrm{i}kx},
\label{eq:rho_truncated}
\\
K_N(x,t)
&\approx
\sum_{k\in\mathcal{M}_0}
\hat{K}_{k}(t) e^{\mathrm{i}kx}.
\label{eq:kernel_truncated}
\end{align}
\end{subequations}
Substituting the approximate expansions in \eqref{eq:truncated_expansions} into
\eqref{eq:modal_dynamics} yields the finite-dimensional, approximated modal dynamics

\begin{equation}
\dot{\hat{\rho}}_{k}(t)
=
-2\pi \mathrm{i} k
\sum_{n\in\mathcal{M}_0}
\hat{\rho}_{k-n}(t)
\hat{\rho}_{n}(t)
\hat{K}_{n}(t),
\qquad
k\in\mathcal{M}_0,
\label{eq:truncated_modal}
\end{equation}
where coefficients with indices outside \(\mathcal{M}_0\) are set to zero. This truncated representation provides a finite-dimensional approximation of the original mean-field dynamics.

\subsection{Linearization around equilibria}

 To design appropriate modifications of the interactions between agents in the proximity of equilibrium,
we linearize the system around an equilibrium $\rho^\mathrm{d}$ with equilibrium kernel $ K^\mathrm{d}$. Perturbations are introduced around the equilibrium in the Fourier domain according to
\begin{equation}
\hat{\rho}_k(t)
=
\hat{\rho}_k^\mathrm{d}
+
\delta\hat{\rho}_k(t),
\qquad
\hat K_k(t)
=
\hat K_k^\mathrm{d}
+
\delta\hat K_k(t),
\qquad
k\in\mathcal M_0,
\label{eq:controlled_perturbation}
\end{equation}
where \(\delta\hat{\rho}_k\) and \(\delta\hat K_k\) denote perturbations of the density and kernel coefficients, respectively. The kernel perturbation \(\delta\hat K_k\) is treated as the control input.
Substituting \eqref{eq:controlled_perturbation} into the truncated nonlinear dynamics \eqref{eq:truncated_modal} and using \eqref{eq:modal_equilibrium} yields
\begin{equation}
\delta\dot{\hat\rho}_k
=
-2\pi \mathrm{i} k
\sum_{n\in\mathcal M_0}
\hat\rho_{k-n}^\mathrm{d}
\left(
\hat K_n^\mathrm{d}\,
\delta\hat\rho_n
+
\hat\rho_n^\mathrm{d}\,
\delta\hat K_n
\right),
\qquad
k\in\mathcal M_0.
\label{eq:controlled_linearization}
\end{equation}
We introduce the vectors
\begin{subequations}
\label{eq:equilibrium_vectors}
\begin{align}
\hat{\boldsymbol{\rho}}_0^{\mathrm d}
&=
\left[
\hat{\rho}_{-N}^{\mathrm d},
\ldots,
\hat{\rho}_{0}^{\mathrm d},
\ldots,
\hat{\rho}_{N}^{\mathrm d}
\right]^{\mathrm T},
\label{eq:rho0_vector}
\\
\hat{\boldsymbol{\rho}}^{\mathrm d}
&=
\left[
\hat{\rho}_{-N}^{\mathrm d},
\ldots,
\hat{\rho}_{-1}^{\mathrm d},
\hat{\rho}_{1}^{\mathrm d},
\ldots,
\hat{\rho}_{N}^{\mathrm d}
\right]^{\mathrm T},
\label{eq:rho_vector}
\\
\hat{\mathbf K}^{\mathrm d}
&=
\left[
\hat{K}_{-N}^{\mathrm d},
\ldots,
\hat{K}_{-1}^{\mathrm d},
\hat{K}_{1}^{\mathrm d},
\ldots,
\hat{K}_{N}^{\mathrm d}
\right]^{\mathrm T},
\label{eq:K_vector}
\end{align}
\end{subequations}
along with the matrix
\begin{equation}
\mathbf D_0
=
\operatorname{diag}
\left(
-N,\ldots,-1,0,1,\ldots,N
\right),
\label{eq:D0}
\end{equation}
and the Toeplitz convolution matrix
\begin{equation}
\begingroup
\setlength{\arraycolsep}{3pt}
\renewcommand{\arraystretch}{0.92}
\mathbf T_0\!\left(\hat{\boldsymbol{\rho}}_0^{\mathrm d}\right)
=
\begin{bmatrix}
\hat{\rho}_0^{\mathrm d} &
\hat{\rho}_{-1}^{\mathrm d} &
\cdots &
\hat{\rho}_{-N}^{\mathrm d} &
0 &
\cdots &
0
\\
\hat{\rho}_{1}^{\mathrm d} &
\hat{\rho}_{0}^{\mathrm d} &
\cdots &
\hat{\rho}_{-N+1}^{\mathrm d} &
\hat{\rho}_{-N}^{\mathrm d} &
\ddots &
\vdots
\\
\vdots &
\vdots &
\ddots &
\vdots &
\vdots &
\ddots &
0
\\
\hat{\rho}_{N}^{\mathrm d} &
\hat{\rho}_{N-1}^{\mathrm d} &
\cdots &
\hat{\rho}_{0}^{\mathrm d} &
\hat{\rho}_{-1}^{\mathrm d} &
\cdots &
\hat{\rho}_{-N}^{\mathrm d}
\\
0 &
\hat{\rho}_{N}^{\mathrm d} &
\cdots &
\hat{\rho}_{1}^{\mathrm d} &
\hat{\rho}_{0}^{\mathrm d} &
\cdots &
\hat{\rho}_{-N+1}^{\mathrm d}
\\
\vdots &
\ddots &
\ddots &
\vdots &
\vdots &
\ddots &
\vdots
\\
0 &
\cdots &
0 &
\hat{\rho}_{N}^{\mathrm d} &
\hat{\rho}_{N-1}^{\mathrm d} &
\cdots &
\hat{\rho}_{0}^{\mathrm d}
\end{bmatrix}.
\endgroup
\label{eq:T0}
\end{equation}
With a slight abuse of notation, the matrix entries are indexed by the associated Fourier modes rather than by their numerical row and column indices, so that entry $(1,1)$ of the matrix is referred to as entry $(-N,-N)$, the difference being a uniform shift.
The linearized dynamics can then be written compactly as
\begin{subequations}\label{eq:linearized_system}
\begin{align}
\delta\dot{\hat{\boldsymbol{\rho}}}_0
&=
\mathbf A_0\,
\delta\hat{\boldsymbol{\rho}}_0
+
\mathbf B_0\,
\delta\hat{\mathbf K}_0,
\label{eq:full_linearized}
\\
\mathbf A_0
&=
-2\pi \mathrm{i}\,
\mathbf D_0\,
\mathbf T_0\left(\hat{\boldsymbol{\rho}}_0^{\mathrm d}\right)\,
\operatorname{diag}\left(\hat{\mathbf K}_0^{\mathrm d}\right),
\label{eq:Abar}
\\
\mathbf B_0
&=
-2\pi \mathrm{i}\,
\mathbf D_0\,
\mathbf T_0\left(\hat{\boldsymbol{\rho}}_0^{\mathrm d}\right)\,
\operatorname{diag}\left(\hat{\boldsymbol{\rho}}_0^{\mathrm d}\right).
\label{eq:Bbar}
\end{align}
\end{subequations}
Due to the zero row of $\mathbf D_0$,~\eqref{eq:linearized_system}
implies $\delta\dot{\hat\rho}_0=0$. Consequently, the perturbation of
the zero Fourier mode of the density remains constant over time. We
focus on mass-preserving perturbations satisfying
$\delta\hat\rho_0(0)=0$, which gives
$
\delta\hat\rho_0(t)=0,
\forall\,t\geq0.
$
Upon defining
\begin{equation}
\delta\hat{\boldsymbol\rho}
=
\begin{bmatrix}
\delta\hat\rho_{-N}
&
\cdots
&
\delta\hat\rho_{-1}
&
\delta\hat\rho_{1}
&
\cdots
&
\delta\hat\rho_{N}
\end{bmatrix}^{\mathrm T},
\label{eq:reduced_delta_rho}
\end{equation}
the model reduces to
\begin{equation}
\delta\dot{\hat{\boldsymbol\rho}}
=
\mathbf A\,
\delta\hat{\boldsymbol\rho}
+
\mathbf B\,
\delta\hat{\boldsymbol K}_0,
\label{eq:state_space}
\end{equation}
where
$\mathbf A\in\mathbb C^{2N\times2N}$ and
$\mathbf B\in\mathbb C^{2N\times(2N+1)}$ have entries
\begin{subequations}
\label{eq:reduced_matrix_entries}
\begin{align}
\left(\mathbf A\right)_{mn}
&=
-2\pi\mathrm{i}\,m\,
\hat{\rho}_{m-n}^{\mathrm d}
\hat K_{n}^{\mathrm d},
\qquad
m,n\in\mathcal M,
\label{eq:AR_entry}
\\
\left(\mathbf B\right)_{mn}
&=
-2\pi\mathrm{i}\,m\,
\hat{\rho}_{m-n}^{\mathrm d}
\hat{\rho}_{n}^{\mathrm d},
\qquad
m\in\mathcal M,\quad n\in\mathcal M_0,
\label{eq:BR_entry}
\end{align}
\end{subequations}
To obtain compact matrix representations of
$\mathbf A$ and $\mathbf B$, we define the reduced equilibrium vectors
\begin{subequations}\label{eq:reduced_equilibrium_vectors}
\begin{align}
\hat{\boldsymbol{\rho}}^{\mathrm d}
&=
\begin{bmatrix}
\hat{\rho}_{-N}^{\mathrm d} &
\cdots &
\hat{\rho}_{-1}^{\mathrm d} &
\hat{\rho}_{1}^{\mathrm d} &
\cdots &
\hat{\rho}_{N}^{\mathrm d}
\end{bmatrix}^{\mathrm T},
\label{eq:rhoR}
\\
\hat{\mathbf K}^{\mathrm d}
&=
\begin{bmatrix}
\hat{K}_{-N}^{\mathrm d} &
\cdots &
\hat{K}_{-1}^{\mathrm d} &
\hat{K}_{1}^{\mathrm d} &
\cdots &
\hat{K}_{N}^{\mathrm d}
\end{bmatrix}^{\mathrm T},
\label{eq:KR}
\end{align}
\end{subequations}
together with
\begin{equation}
\mathbf D
=
\operatorname{diag}
\left(
-N,\ldots,-1,1,\ldots,N
\right).
\label{eq:D}
\end{equation}
Since the equilibrium condition~\eqref{eq:modal_equilibrium}
implies
$\hat K_0^{\mathrm d}=0$
(because
$\hat\rho_0^{\mathrm d}=1/(2\pi)\neq0$, see \eqref{eq:rho0}),
the zero Fourier mode is excluded from
$\hat{\mathbf K}^{\mathrm d}$. Let
$\mathbf T_A\!\left(\hat{\boldsymbol{\rho}}_0^{\mathrm d}\right)\in\mathbb{C}^{2N\times2N}$
denote the matrix obtained from
$\mathbf T_0\!\left(\hat{\boldsymbol{\rho}}_0^{\mathrm d}\right)$
by removing the row and column associated with the zero Fourier mode.
Likewise, let
$\mathbf T_B\!\left(\hat{\boldsymbol{\rho}}_0^{\mathrm d}\right)
\in
\mathbb{C}^{2N \times (2N+1)}$
denote the matrix obtained by removing only the row associated with the zero Fourier mode.
Notice that neither
$\mathbf T_A\!\left(\hat{\boldsymbol{\rho}}_0^{\mathrm d}\right)$
nor
$\mathbf T_B\!\left(\hat{\boldsymbol{\rho}}_0^{\mathrm d}\right)$
coincides with the Toeplitz matrix generated  from
$\hat{\boldsymbol{\rho}}^{\mathrm d}$,
since
$\hat{\rho}_0^{\mathrm d}$
continues to appear in their entries, and they are not Toeplitz.
The matrices in~\eqref{eq:state_space} can therefore be written compactly as
\begin{subequations}\label{eq:factorized_matrices}
\begin{align}
\mathbf A
&=
-2\pi \mathrm{i}\,
\mathbf D\,
\mathbf T_A\left(\hat{\boldsymbol{\rho}}_0^{\mathrm d}\right)
\operatorname{diag}
\left(
\hat{\mathbf K}^{\mathrm d}
\right),
\label{eq:AR_factorized}
\\
\mathbf B
&=
-2\pi \mathrm{i}\,
\mathbf D\,
\mathbf T_B\left(\hat{\boldsymbol{\rho}}_0^{\mathrm d}\right)
\operatorname{diag}
\left(
\hat{\boldsymbol{\rho}}_0^{\mathrm d}
\right).
\label{eq:BR_factorized}
\end{align}
\end{subequations}
The factorizations in \eqref{eq:factorized_matrices} explicitly separate the
contributions of the equilibrium kernel and the equilibrium density.
Matrices
$\mathbf T_A\!\left(\hat{\boldsymbol{\rho}}_0^{\mathrm d}\right)$
and
$\mathbf T_B\!\left(\hat{\boldsymbol{\rho}}_0^{\mathrm d}\right)$
capture the convolution coupling among the retained Fourier modes, whereas the
diagonal matrices isolate the modal contributions of
$\hat{\mathbf K}^{\mathrm d}$
and
$\hat{\boldsymbol{\rho}}_0^{\mathrm d}$,
respectively.
Furthermore, the equilibrium condition~\eqref{eq:modal_equilibrium}
implies that the supports of
$\hat{\mathbf K}^{\mathrm d}$
and
$\hat{\boldsymbol{\rho}}^{\mathrm d}$
are disjoint.
Consequently, for every retained Fourier mode
$n\in\mathcal M$,
at most one of
$\hat K_n^{\mathrm d}$
and
$\hat\rho_n^{\mathrm d}$
is non-zero.
Assuming that at least one of the two is different from zero (as to improve controllability), the retained Fourier modes can be partitioned into two disjoint subsets:
those contributing to
$\mathbf A$
and those contributing to
$\mathbf B$.
This partition of the retained Fourier modes constitutes the key structural property underlying the controllability analysis developed in the next section.
\subsection{State-feedback controller design}
\label{sec:controller_design}
The system in~\eqref{eq:state_space} represents a standard linear time-invariant system, for which we design a state-feedback controller \cite{Anderson1990}.
The control input is selected as
\begin{equation}
\delta \hat{\mathbf{K}}
=
-\mathbf{K}_\mathrm{c} \delta \hat{\boldsymbol{\rho}},
\label{eq:state_feedback}
\end{equation}
where $\mathbf{K}_\mathrm{c}\in \mathbb{C}^{(2N+1)\times 2N}$ is a gain matrix that can be designed through a variety of techniques, such as pole placement ~\cite{Kailath1980} or a linear quadratic regulator (LQR) \cite{Anderson1990}, to obtain desired behaviors of the system, such as asymptotic stability. Substituting \eqref{eq:state_feedback} into
\eqref{eq:state_space} yields the closed-loop dynamics
\begin{equation}
\delta\dot{\hat{\boldsymbol{\rho}}}
=
\left(
\mathbf A
-
\mathbf B\mathbf{K}_\mathrm{c}
\right)
\delta\hat{\boldsymbol{\rho}}.
\label{eq:closed_loop}
\end{equation}
\section{Controllability Analysis}
\label{sec:controllability}

As modulation of the kernel represents a new pathway for controlling
system dynamics, an interesting question regards whether it is possible
to completely control the linearized dynamics near the desired
equilibrium, based on the truncated Fourier representation. Studying the
controllability of general systems is challenging. Thus, in this section,
we focus on demonstrating controllability for an application-relevant condition. Complementing this effort, we numerically investigate
controllability for more general systems in 
Section~\ref{numerical_results}.

Here, we focus on demonstrating controllability for a condition that
is relevant to most applications. In practical situations, we are
interested in achieving an equilibrium up to a certain spatial bandwidth.
Such an equilibrium typically contains all the modes within such a
bandwidth. We truncate the Fourier series of the equilibrium to $N-1$
and set the remaining modes to zero. We consider a single-mode
equilibrium kernel that is zero apart from the boundary modes at $\pm N$. Thus, we
consider a desired density and an equilibrium kernel in the form
\begin{subequations}
\label{eq:support_conditions}
\begin{align}
\hat{\rho}_{\pm N}^{\mathrm d}
&=
0,
\qquad
\hat{\rho}_{j}^{\mathrm d}\neq0,
\quad |j|<N,
\label{eq:rho_support}
\\
\hat{K}_{\pm N}^{\mathrm d}
&\neq0,
\qquad
\hat{K}_{j}^{\mathrm d}=0,
\quad j\neq\pm N.
\label{eq:k_support}
\end{align}
\end{subequations}
This support configuration is consistent with the modal equilibrium
condition~\eqref{eq:modal_equilibrium}: the equilibrium density is
supported on all retained modes except the boundary modes, whereas the
equilibrium kernel is supported exclusively on the boundary modes. We aim at demonstrating the controllability of the truncated system
in~\eqref{eq:state_space}, under the specific conditions
of~\eqref{eq:support_conditions}. To this end, we seek to determine
under which conditions the controllability matrix
\begin{equation}
\mathcal C
=
\begin{bmatrix}
\mathbf B &
\mathbf A\mathbf B &
\mathbf A^2\mathbf B &
\cdots &
\mathbf A^{2N-1}\mathbf B
\end{bmatrix}
\label{eq:controllability_matrix}
\end{equation}
has full row rank~\cite{Chen1999}. First, we focus on the conditions on the desired density's Fourier
coefficients that ensure that the remaining nonzero columns of
$\mathbf B$ are linearly independent, and hence that
$\operatorname{rank}(\mathbf B)=2N-1$.
To this end, consider the full Toeplitz matrix
$\mathbf T_0\left(\hat{\boldsymbol{\rho}}_0^{\mathrm d}\right)$. For any non-zero
$\mathbf c\in\mathbb C^{2N+1}$, the associated quadratic
form admits the standard integral representation
\cite{GrenanderSzego1958}
\begin{equation}
\mathbf c^*
\mathbf T_0\left(\hat{\boldsymbol{\rho}}_0^{\mathrm d}\right)
\mathbf c
=
\frac{1}{2\pi}
\int_{-\pi}^{\pi}
\rho^{\mathrm d}(x)
\left|
\sum_{k=-N}^{N}
c_k e^{\mathrm{i}kx}
\right|^2
\,\mathrm{d}x.
\label{eq:T0_quadratic_form}
\end{equation}
Since the prescribed desired density is nonnegative and normalized,
it is positive on a set of non-zero measure, such that
$\mathbf T_0\left(\hat{\boldsymbol{\rho}}_0^{\mathrm d}\right)$ is
positive definite and thus nonsingular, with rank $2N+1$. Consequently,
\begin{equation}
\operatorname{rank}\left[
\mathbf T_0\left(\hat{\boldsymbol{\rho}}_0^{\mathrm d}\right)
\right]
=
2N+1.
\label{eq:T0_full_rank}
\end{equation}
Since
$\mathbf T_0\left(\hat{\boldsymbol{\rho}}_0^{\mathrm d}\right)$
is nonsingular, removing the two columns indexed by $\pm N$ leaves
$2N-1$ linearly independent columns. Additionally removing the
zero-mode row can reduce this rank by at most one. Therefore, the
resulting matrix has rank of either $2N-1$ or $2N-2$. It remains to
determine when removal of the zero-mode row preserves the rank
$2N-1$. To characterize the maximal minors of
$\widetilde{\mathbf T}_B
\left(\hat{\boldsymbol{\rho}}_0^{\mathrm d}\right)$,
we note that these are also minors of
$\mathbf T_0
\left(\hat{\boldsymbol{\rho}}_0^{\mathrm d}\right)$,
obtained by deleting the zero-mode row, one additional row, and the
columns indexed by $\pm N$. We employ a finite banded-Toeplitz
minor representation in terms of skew Schur
polynomials~\cite{Maximenko2017}. Consider the Laurent polynomial
generated by the desired density's coefficients,
\begin{equation}
a(z)
=
\sum_{k=-(N-1)}^{N-1}
\hat{\rho}_{k}^{\mathrm d}z^k,
\qquad
z^{N-1}a(z)
=
\hat{\rho}_{N-1}^{\mathrm d}
\prod_{\ell=1}^{2N-2}(z-z_\ell),
\label{eq:density_symbol}
\end{equation}
where $z_1,\ldots,z_{2N-2}$ denote the roots associated with the
desired density's Fourier coefficients. For each $r\in\mathcal M$, we define the partitions
\begin{equation}
\boldsymbol{\beta}^{(r)}
=
\begin{cases}
(N-r-1,N), & r<0,\\[1mm]
(N-1,N-r), & r>0,
\end{cases}
\qquad
\boldsymbol{\alpha}
=
((2N-1)^N).
\label{eq:schur_partitions}
\end{equation}
Here, a partition is a finite non-increasing sequence of nonnegative
integers. The partitions $\boldsymbol{\alpha}$ and
$\boldsymbol{\beta}^{(r)}$ determine the corresponding skew Schur
polynomial
$s_{\boldsymbol{\alpha}/\boldsymbol{\beta}^{(r)}}$, which is a
symmetric polynomial in $z_1,\ldots,z_{2N-2}$, providing a compact representation of the
maximal minors of the banded Toeplitz matrix.
\begin{lemma}
\label{lem:Rank_BR}
Let $N\ge2$. Under~\eqref{eq:support_conditions}, the matrix
$\mathbf B$ in~\eqref{eq:reduced_matrix_entries} satisfies
\begin{equation}
\operatorname{rank}(\mathbf B)=2N-1
\label{eq:rank_BR}
\end{equation}
if and only if there exists at least one $r\in\mathcal M$ such that
\begin{equation}
s_{\boldsymbol{\alpha}/\boldsymbol{\beta}^{(r)}}
\left(z_1,\ldots,z_{2N-2}\right)
\neq0,
\label{eq:schur_rank_condition}
\end{equation}
where
$s_{\boldsymbol{\alpha}/\boldsymbol{\beta}^{(r)}}$
denotes the skew Schur polynomial associated with the partitions in
\eqref{eq:schur_partitions}.
\end{lemma}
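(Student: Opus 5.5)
Under \eqref{eq:support_conditions}, I first reduce $\mathbf B$ to a submatrix of $\mathbf T_0\left(\hat{\boldsymbol{\rho}}_0^{\mathrm d}\right)$, and then identify its maximal minors with the skew Schur polynomials in \eqref{eq:schur_partitions}.

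\textbf{Step 1: reduction of $\mathbf B$.} By \eqref{eq:BR_factorized}, $\mathbf B=-2\pi\mathrm i\,\mathbf D\,\mathbf T_B\,\operatorname{diag}(\hat{\boldsymbol\rho}_0^{\mathrm d})$. The factor $\mathbf D$ is invertible on $\mathcal M$ because zero is excluded, so it does not affect the rank. By \eqref{eq:rho_support}, $\operatorname{diag}(\hat{\boldsymbol\rho}_0^{\mathrm d})$ kills exactly the columns indexed by $\pm N$ and rescales the remaining $2N-1$ columns by nonzero scalars; this includes index $0$, since $\hat\rho_0^{\mathrm d}=1/(2\pi)$. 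Hence $\operatorname{rank}(\mathbf B)=\operatorname{rank}(\widetilde{\mathbf T}_B)$. Here $\widetilde{\mathbf T}_B$ is the $2N\times(2N-1)$ matrix obtained from $\mathbf T_0$ by deleting the zero-mode row and the columns $\pm N$. Its rank is at most $2N-1$. As noted before the lemma, it is at least $2N-2$, since $\mathbf T_0$ is positive definite by \eqref{eq:T0_quadratic_form}.

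\textbf{Step 2: rank via maximal minors.} The condition $\operatorname{rank}=2N-1$ holds iff some $(2N-1)\times(2N-1)$ minor is nonzero. Such a minor is obtained by deleting one further row $r\in\mathcal M$. So the lemma reduces to showing that, for each $r$, the minor $M_r$ of $\mathbf T_0$ with rows $\mathcal M_0\setminus\{0,r\}$ and columns $\mathcal M_0\setminus\{\pm N\}$ equals a nonzero constant times $s_{\boldsymbol\alpha/\boldsymbol\beta^{(r)}}(z_1,\dots,z_{2N-2})$.

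\textbf{Step 3: Schur representation (main obstacle).} Since $\hat\rho^{\mathrm d}_{\pm N}=0$, the entries of $\mathbf T_0$ restricted to these columns involve only the symbol $a(z)$ in \eqref{eq:density_symbol}. The matrix is therefore a section of the banded Toeplitz matrix with bandwidth $N-1$ on each side. I would invoke the result of~\cite{Maximenko2017}: a minor of a banded Toeplitz matrix $[a_{j-k}]$ with selected rows and columns equals $(\pm\hat\rho^{\mathrm d}_{N-1})^{m}\, s_{\boldsymbol\alpha/\boldsymbol\beta}(z_1,\ldots,z_{2N-2})$. In that result, $\boldsymbol\alpha$ and $\boldsymbol\beta$ encode the deleted row and column indices relative to a contiguous block, and the constant is a power of the leading coefficient up to sign.

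The delicate part is the index bookkeeping. The column set $\{-N+1,\dots,N-1\}$ is contiguous. The row set is $\{-N,\dots,N\}$ with $0$ and $r$ removed, so two gaps appear, and each gap produces one part of $\boldsymbol\beta^{(r)}$. Shifting to the convention of \cite{Maximenko2017}, where rows are $\lambda_j+$offset and the rectangle $\boldsymbol\alpha=((2N-1)^N)$ arises from the $2N-2$ roots and the band structure, I would verify that the two gaps give $(N-r-1,N)$ for $r<0$ and $(N-1,N-r)$ for $r>0$. The case split reflects whether $r$ lies left or right of the zero gap, which fixes the order of the parts so that $\boldsymbol\beta^{(r)}$ is non-increasing.

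I expect the main risk to be off-by-one errors in this translation. I would check it on $N=2$ by direct computation of the $3\times3$ minors.

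\textbf{Step 4: conclusion.} Since $\hat\rho^{\mathrm d}_{N-1}\neq0$, we have $M_r\neq0$ iff $s_{\boldsymbol\alpha/\boldsymbol\beta^{(r)}}\neq0$. Combining with Steps 1–2 yields $\operatorname{rank}(\mathbf B)=2N-1$ iff \eqref{eq:schur_rank_condition} holds for some $r\in\mathcal M$.
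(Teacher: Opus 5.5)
Your proposal follows the paper's proof step for step: drop the vanishing columns $\pm N$, strip off the nonsingular factors $\mathbf D$ and $\widetilde{\mathbf D}_{\rho}$ to get $\operatorname{rank}(\mathbf B)=\operatorname{rank}(\widetilde{\mathbf T}_B)$, reduce to the $2N$ maximal minors $\mathbf M_r$ (one further row $r\in\mathcal M$ deleted), and identify each with a skew Schur polynomial via Theorem~2.1 of~\cite{Maximenko2017}. The index bookkeeping you defer is exactly the paper's one-line substitution: the deleted columns $(1,2N+1)$ give $\boldsymbol\alpha=((2N-1)^N)$, the ordered deleted rows $\boldsymbol\xi^{(r)}$ give $(2N-\xi_1^{(r)},\,2N+1-\xi_2^{(r)})=\boldsymbol\beta^{(r)}$, and the prefactor is $|\hat\rho_{N-1}^{\mathrm d}|^{2N-1}$; this translation checks out, so your argument is complete once it is written down.
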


\begin{proof}
Let
$\widetilde{\mathbf B}\in
\mathbb C^{2N\times(2N-1)}$
denote the matrix obtained from $\mathbf B$ by removing the columns
indexed by $n=\pm N$. Under~\eqref{eq:rho_support}, these columns
vanish by~\eqref{eq:BR_entry}, since
$\hat{\rho}_{\pm N}^{\mathrm d}=0$. Hence,
$\operatorname{rank}(\mathbf B)
=
\operatorname{rank}(\widetilde{\mathbf B})$.
From the factorization in~\eqref{eq:BR_factorized},
\begin{equation}
\widetilde{\mathbf B}
=
-2\pi\mathrm{i}\,
\mathbf D\,
\widetilde{\mathbf T}_B
\left(
\hat{\boldsymbol{\rho}}_0^{\mathrm d}
\right)
\widetilde{\mathbf D}_{\rho},
\label{eq:B_tilde_factorization}
\end{equation}
where
$\widetilde{\mathbf T}_B
\left(\hat{\boldsymbol{\rho}}_0^{\mathrm d}\right)
\in\mathbb C^{2N\times(2N-1)}$
is obtained from
$\mathbf T_B\left(\hat{\boldsymbol{\rho}}_0^{\mathrm d}\right)$
by removing the columns indexed by $\pm N$, and
\begin{equation}
\widetilde{\mathbf D}_{\rho}
=
\operatorname{diag}
\left(
\hat{\rho}_{-N+1}^{\mathrm d},
\ldots,
\hat{\rho}_{-1}^{\mathrm d},
\hat{\rho}_{0}^{\mathrm d},
\hat{\rho}_{1}^{\mathrm d},
\ldots,
\hat{\rho}_{N-1}^{\mathrm d}
\right).
\label{eq:Drho_tilde}
\end{equation}
The matrix $\mathbf D$ is nonsingular because its diagonal entries
are the non-zero Fourier modes in $\mathcal M$, whereas
$\widetilde{\mathbf D}_{\rho}$ is nonsingular by
\eqref{eq:rho_support}. Hence,
\begin{equation}
\operatorname{rank}(\mathbf B)
=
\operatorname{rank}(\widetilde{\mathbf B})
=
\operatorname{rank}\left[
\widetilde{\mathbf T}_B
\left(\hat{\boldsymbol{\rho}}_0^{\mathrm d}\right)
\right].
\label{eq:B_TB_rank_equivalence}
\end{equation}
Since
$\widetilde{\mathbf T}_B
\left(\hat{\boldsymbol{\rho}}_0^{\mathrm d}\right)$
has $2N-1$ columns, it has rank $2N-1$ if and only if at least one
of its maximal $(2N-1)\times(2N-1)$ minors is nonsingular. For each
$r\in\mathcal M$, let $\mathbf M_r$ denote the maximal minor obtained
by removing the row indexed by $r$. Hence,
\begin{equation}
\operatorname{rank}\left[
\widetilde{\mathbf T}_B
\left(
\hat{\boldsymbol{\rho}}_0^{\mathrm d}
\right)
\right]
=
2N-1
\iff
\exists\,r\in\mathcal M:
\det(\mathbf M_r)\neq0.
\label{eq:maximal_minor_condition}
\end{equation}
Each $\mathbf M_r$ is also a minor of the full Toeplitz matrix
$\mathbf T_0\left(\hat{\boldsymbol{\rho}}_0^{\mathrm d}\right)$,
obtained by deleting the rows associated with the modes $0$ and $r$
and the columns associated with the modes $\pm N$. The Fourier-mode
index $q\in\{-N,\ldots,N\}$ corresponds to the one-based Toeplitz
index $q+N+1$. Therefore, the deleted column indices are $(1,2N+1)$,
whereas the ordered deleted row indices are
\begin{equation}
\boldsymbol{\xi}^{(r)}
=
\begin{cases}
(r+N+1,N+1), & r<0,\\[1mm]
(N+1,r+N+1), & r>0.
\end{cases}
\label{eq:deleted_row_indices}
\end{equation}
The matrix
$\mathbf T_0\left(\hat{\boldsymbol{\rho}}_0^{\mathrm d}\right)$
is a finite banded Toeplitz matrix generated by the Laurent polynomial
$a(z)$ in~\eqref{eq:density_symbol}. Applying the finite banded-Toeplitz minor representation in
Theorem~2.1 of~\cite{Maximenko2017}, we have
\begin{equation}
\bigl|\det(\mathbf M_r)\bigr|
=
\bigl|\hat{\rho}_{N-1}^{\mathrm d}\bigr|^{2N-1}
\bigl|
s_{\boldsymbol{\alpha}/\boldsymbol{\beta}^{(r)}}
(z_1,\ldots,z_{2N-2})
\bigr|.
\label{eq:minor_schur_representation}
\end{equation}
The deleted columns $(1,2N+1)$ yield
$\boldsymbol{\alpha}=((2N-1)^N)$. For the deleted rows, the
corresponding partition in Theorem~2.1 of~\cite{Maximenko2017} is
$\left(2N-\xi_1^{(r)},\,2N+1-\xi_2^{(r)}\right)$.
Substitution of~\eqref{eq:deleted_row_indices} gives
$\boldsymbol{\beta}^{(r)}$ in~\eqref{eq:schur_partitions}.
Since
$\bigl|\hat{\rho}_{N-1}^{\mathrm d}\bigr|>0$
by~\eqref{eq:rho_support},
\eqref{eq:minor_schur_representation} implies
\begin{equation}
\det(\mathbf M_r)\neq0
\quad\Longleftrightarrow\quad
s_{\boldsymbol{\alpha}/\boldsymbol{\beta}^{(r)}}
(z_1,\ldots,z_{2N-2})
\neq0.
\label{eq:minor_schur_equivalence}
\end{equation}
Combining \eqref{eq:B_TB_rank_equivalence},
\eqref{eq:maximal_minor_condition}, and
\eqref{eq:minor_schur_equivalence} proves
\eqref{eq:rank_BR}.
\end{proof}
Under~\eqref{eq:schur_rank_condition}, $\mathbf B$ has rank
deficiency one. The condition is computationally inexpensive to check
for a given truncation order $N$, as it requires computing the
$2N-2$ roots of the polynomial in~\eqref{eq:density_symbol} and
evaluating the corresponding skew Schur polynomials for
$r\in\mathcal M$. Since only one of these polynomials needs to be
nonzero, the evaluation can be terminated as soon as a nonzero value
is found. For the controllability matrix $\mathcal C$ to be full-rank ($2N$),
we only require that the combinations $\mathbf A^i\mathbf B$ generate
a column vector outside $\operatorname{Col}(\mathbf B)$. We show that,
under~\eqref{eq:support_conditions}, $\mathbf A\mathbf B$ is enough to
generate such an independent column. To this end, we exploit the boundary-mode structure imposed by
\eqref{eq:support_conditions}. Under these conditions, $\mathbf A$
acts only through its two boundary columns, so that
$\mathbf A\mathbf B$ is determined by the corresponding boundary rows
of $\mathbf B$. We show that these boundary contributions are not
annihilated by the one-dimensional left nullspace of $\mathbf B$,
allowing $\mathbf A\mathbf B$ to recover the single column missing
from $\mathbf B$. 
\begin{lemma}
\label{lem:boundary_row}
Let
\begin{equation}
\mathbf V
=
\begin{bmatrix}
\mathbf B(-N,:)\\
\mathbf B(N,:)
\end{bmatrix}
\in\mathbb C^{2\times(2N+1)}.
\label{eq:V_def}
\end{equation}
Under~\eqref{eq:rho_support}, for $N\ge2$,
\begin{equation}
\operatorname{rank}\left(\mathbf V\right)=2.
\label{eq:rank_V}
\end{equation}
\end{lemma}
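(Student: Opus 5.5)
We will compute the two boundary rows explicitly from \eqref{eq:BR_entry} and then exhibit a nonsingular $2\times2$ submatrix. For $m=-N$ the entries are $(\mathbf B)_{-N,n}=2\pi\mathrm{i}N\,\hat\rho^{\mathrm d}_{-N-n}\hat\rho^{\mathrm d}_n$. Since $\hat\rho^{\mathrm d}_{-N-n}$ vanishes unless $|{-N-n}|\le N-1$, i.e.\ $n\in\{-N+1,\ldots,-1\}$ (the index range $-2N+1\le n\le -N-1$ lies outside $\mathcal M_0$, and $n=0$ gives $\hat\rho^{\mathrm d}_{-N}=0$ by \eqref{eq:rho_support}), row $-N$ is supported only on the negative modes $n=-N+1,\ldots,-1$. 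Symmetrically, $(\mathbf B)_{N,n}=-2\pi\mathrm{i}N\,\hat\rho^{\mathrm d}_{N-n}\hat\rho^{\mathrm d}_n$ is nonzero only if $1\le N-n\le N-1$ and $n\neq\pm N$, i.e.\ $n\in\{1,\ldots,N-1\}$. Here we use that coefficients with indices outside $\mathcal M_0$ are zero and that $\hat\rho^{\mathrm d}_{\pm N}=0$.

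Thus the two rows of $\mathbf V$ have disjoint supports: row $-N$ lives on columns $\{-N+1,\ldots,-1\}$ and row $N$ on columns $\{1,\ldots,N-1\}$. Both sets are nonempty because $N\ge2$. It then suffices to show that each row is nonzero. Take $n=-1$ in row $-N$: the entry is $2\pi\mathrm{i}N\,\hat\rho^{\mathrm d}_{-N+1}\hat\rho^{\mathrm d}_{-1}$, which is nonzero because $|{-N+1}|<N$ and $|{-1}|<N$ give nonzero coefficients by \eqref{eq:rho_support}. Likewise, the entry $(N,1)$ equals $-2\pi\mathrm{i}N\,\hat\rho^{\mathrm d}_{N-1}\hat\rho^{\mathrm d}_1\neq0$.

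The $2\times2$ submatrix of $\mathbf V$ on columns $\{-1,1\}$ is therefore diagonal with nonzero diagonal entries, hence nonsingular, which gives $\operatorname{rank}(\mathbf V)=2$. The only delicate point is the index bookkeeping at the truncation boundary. One must confirm that the truncation convention in \eqref{eq:truncated_modal} indeed sets $\hat\rho^{\mathrm d}_{k}=0$ for $|k|>N$, so that row $-N$ has no contribution on positive columns and row $N$ none on negative columns. One must also check that $N\ge2$ is exactly what ensures that columns $\pm1$ differ from $\pm N$ and that $\hat\rho^{\mathrm d}_{\pm(N-1)}$ are genuine nonzero interior modes.
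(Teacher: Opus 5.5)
Your proof is correct and follows the paper's argument: you compute the two boundary rows from \eqref{eq:BR_entry}, show that their supports $\{-N+1,\ldots,-1\}$ and $\{1,\ldots,N-1\}$ are disjoint, and conclude independence from the rows being nonzero, which you make explicit through the diagonal $2\times2$ minor on columns $\pm1$. (Small bookkeeping slip: within $\mathcal M_0$ the condition $|{-N-n}|\le N-1$ gives $n\in\{-N,\ldots,-1\}$, and the column $n=-N$ vanishes because of the factor $\hat\rho^{\mathrm d}_{n}=\hat\rho^{\mathrm d}_{-N}=0$, not because of $\hat\rho^{\mathrm d}_{-N-n}$.)
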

\begin{proof}
For $n\in\mathcal M_0$, the entries of the two boundary rows 
from~\eqref{eq:BR_entry} are
\begin{subequations}
\label{eq:boundary_rows}
\begin{align}
(\mathbf V)_{1,n}
&=
2\pi\mathrm{i}N\,
\hat{\rho}_n^{\mathrm d}
\hat{\rho}_{-N-n}^{\mathrm d},
\label{eq:minus_boundary_row}
\\
(\mathbf V)_{2,n}
&=
-2\pi\mathrm{i}N\,
\hat{\rho}_n^{\mathrm d}
\hat{\rho}_{N-n}^{\mathrm d}.
\label{eq:plus_boundary_row}
\end{align}
\end{subequations}
By~\eqref{eq:rho_support}, the first row of $\mathbf V$ can have non-zero entries only for
$n\in\{-N+1,\ldots,-1\}$, while the second row can have non-zero
entries only for $n\in\{1,\ldots,N-1\}$.
Thus, the two rows of $\mathbf V$ have disjoint supports and are non-zero due to~\eqref{eq:rho_support}, making them linearly independent, so that
$\operatorname{rank}(\mathbf V)=2$.
\end{proof}
Under~\eqref{eq:support_conditions}, the only non-zero columns of
$\mathbf A$ are those indexed by $\pm N$, denoted by
$\mathbf a_{\pm N}$. Let
\begin{equation}
\mathbf A_N
=
\begin{bmatrix}
\mathbf a_{-N} & \mathbf a_N
\end{bmatrix}
\in\mathbb C^{2N\times2},
\label{eq:AN_definition}
\end{equation}
which selects the non-zero columns of $\mathbf A$.
\begin{lemma}
\label{lem:missing_direction}
Suppose that~\eqref{eq:support_conditions} and~\eqref{eq:schur_rank_condition}
hold. Let $\mathbf q\in\mathbb C^{2N\times1}$ be a non-zero vector
spanning the one-dimensional left nullspace of $\mathbf B$, that is,
\begin{equation}
\mathbf q^*\mathbf B=\mathbf 0,
\label{eq:q_left_null}
\end{equation}
where $(\cdot)^*$ denotes the conjugate transpose. Then
\begin{equation}
\mathbf q^*\mathbf A_N\neq\mathbf 0.
\label{eq:q_AN_non-zero}
\end{equation}
\end{lemma}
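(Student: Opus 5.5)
The plan is to avoid analysing $\mathbf q$ explicitly and instead show that the columns of $\mathbf B$, together with the two non-zero columns of $\mathbf A$, span all of $\mathbb C^{2N}$. Then no non-zero vector can annihilate both $\mathbf B$ and $\mathbf A_N$. The observation that makes this work is that, under~\eqref{eq:support_conditions}, the columns $\mathbf a_{\pm N}$ are (up to non-zero scalars) exactly the columns of $-2\pi\mathrm{i}\,\mathbf D\,\mathbf T_B(\hat{\boldsymbol\rho}_0^{\mathrm d})$ that were lost from $\mathbf B$ because $\hat\rho^{\mathrm d}_{\pm N}=0$.

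First, I read off the boundary columns of $\mathbf A$ from~\eqref{eq:AR_entry}:
\begin{equation*}
(\mathbf a_{\pm N})_m=-2\pi\mathrm{i}\,m\,\hat\rho^{\mathrm d}_{m\mp N}\hat K^{\mathrm d}_{\pm N},\qquad m\in\mathcal M .
\end{equation*}
In matrix form, $\mathbf a_{\pm N}=-2\pi\mathrm{i}\,\hat K^{\mathrm d}_{\pm N}\,\mathbf D\,\mathbf t_{\pm N}$, where $\mathbf t_{\pm N}$ denotes the column of $\mathbf T_B(\hat{\boldsymbol\rho}_0^{\mathrm d})$ indexed by $\pm N$. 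Similarly, by~\eqref{eq:B_tilde_factorization}, the non-zero columns of $\mathbf B$ form $\widetilde{\mathbf B}=-2\pi\mathrm{i}\,\mathbf D\,\widetilde{\mathbf T}_B\widetilde{\mathbf D}_\rho$. Stacking these columns, after a column permutation, gives
\begin{equation*}
\begin{bmatrix}\widetilde{\mathbf B} & \mathbf A_N\end{bmatrix}\mathbf P
=-2\pi\mathrm{i}\,\mathbf D\,\mathbf T_B(\hat{\boldsymbol\rho}_0^{\mathrm d})\,\boldsymbol\Lambda ,
\end{equation*}
where $\mathbf P$ is a permutation matrix and $\boldsymbol\Lambda=\operatorname{diag}(\hat K^{\mathrm d}_{-N},\hat\rho^{\mathrm d}_{-N+1},\ldots,\hat\rho^{\mathrm d}_{N-1},\hat K^{\mathrm d}_{N})$. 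The matrix $\boldsymbol\Lambda$ is nonsingular by~\eqref{eq:rho_support} and~\eqref{eq:k_support}, and $\mathbf D$ is nonsingular.

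Next, I use~\eqref{eq:T0_full_rank}. Since $\mathbf T_0(\hat{\boldsymbol\rho}_0^{\mathrm d})$ is nonsingular, its rows are linearly independent. Deleting the zero-mode row therefore leaves $2N$ independent rows, so $\operatorname{rank}\mathbf T_B(\hat{\boldsymbol\rho}_0^{\mathrm d})=2N$. It follows that $\begin{bmatrix}\widetilde{\mathbf B}&\mathbf A_N\end{bmatrix}$ has rank $2N$, i.e.\ full row rank.

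Finally, I argue by contradiction. Suppose $\mathbf q^*\mathbf A_N=\mathbf 0$. Together with~\eqref{eq:q_left_null}, this gives $\mathbf q^*\begin{bmatrix}\widetilde{\mathbf B}&\mathbf A_N\end{bmatrix}=\mathbf 0$. A matrix of full row rank has trivial left nullspace, so $\mathbf q=\mathbf 0$, contradicting the choice of $\mathbf q$. This proves~\eqref{eq:q_AN_non-zero}.

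Condition~\eqref{eq:schur_rank_condition} enters only through Lemma~\ref{lem:Rank_BR}, which guarantees that the left nullspace is one-dimensional so that $\mathbf q$ is well defined. The argument itself only needs $\mathbf q\neq\mathbf 0$. The only delicate point I expect is the bookkeeping of the factorization: checking that the entries of $\mathbf a_{\pm N}$ indeed coincide with columns $\pm N$ of $\mathbf D\,\mathbf T_B$, with the correct index convention $\hat\rho^{\mathrm d}_{m-n}$ used in~\eqref{eq:T0}. Lemma~\ref{lem:boundary_row} is not needed here, though it is consistent with this picture.
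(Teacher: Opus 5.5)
Your proposal is correct and follows essentially the same route as the paper: both rest on the full row rank of $\mathbf T_B(\hat{\boldsymbol\rho}_0^{\mathrm d})$ obtained from~\eqref{eq:T0_full_rank}, on splitting its columns into $\widetilde{\mathbf T}_B$ and the boundary columns $\mathbf t_{\pm N}$, and on the factorization $\mathbf A_N=-2\pi\mathrm{i}\,\mathbf D\,[\mathbf t_{-N}\ \mathbf t_N]\operatorname{diag}(\hat K^{\mathrm d}_{-N},\hat K^{\mathrm d}_N)$. The only difference is packaging: you assemble $[\widetilde{\mathbf B}\ \mathbf A_N]$ into one full-row-rank matrix and conclude in a single step, whereas the paper first derives $\mathbf q^*\mathbf D\widetilde{\mathbf T}_B=\mathbf 0$ and then argues that $\mathbf q^*\mathbf D\mathbf t_{-N}$ and $\mathbf q^*\mathbf D\mathbf t_N$ cannot both vanish.
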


\begin{proof}
Since
$\mathbf T_0\left(\hat{\boldsymbol{\rho}}_0^{\mathrm d}\right)$
is nonsingular by~\eqref{eq:T0_full_rank}, deleting its zero-mode
row yields
$\mathbf T_B\left(\hat{\boldsymbol{\rho}}_0^{\mathrm d}\right)
\in\mathbb C^{2N\times(2N+1)}$
with full row rank. Hence,
\begin{equation}
\operatorname{rank}\left[
\mathbf T_B\left(\hat{\boldsymbol{\rho}}_0^{\mathrm d}\right)
\right]
=
2N.
\label{eq:TB_full_row_rank}
\end{equation}
Let $\mathbf t_{-N}$ and $\mathbf t_N$ denote the columns of
$\mathbf T_B\left(\hat{\boldsymbol{\rho}}_0^{\mathrm d}\right)$
indexed by $-N$ and $N$, respectively. Then
\begin{equation}
\mathbf T_B\left(\hat{\boldsymbol{\rho}}_0^{\mathrm d}\right)
=
\begin{bmatrix}
\mathbf t_{-N} &
\widetilde{\mathbf T}_B
\left(\hat{\boldsymbol{\rho}}_0^{\mathrm d}\right) &
\mathbf t_N
\end{bmatrix}.
\label{eq:TB_column_partition}
\end{equation}
From~\eqref{eq:B_tilde_factorization} and
\eqref{eq:q_left_null},
\begin{equation}
\mathbf q^*\mathbf D\,
\widetilde{\mathbf T}_B
\left(\hat{\boldsymbol{\rho}}_0^{\mathrm d}\right)
=
\mathbf 0,
\label{eq:qD_TB_tilde_zero}
\end{equation}
since $\widetilde{\mathbf D}_{\rho}$ is nonsingular by
\eqref{eq:rho_support}. Then, at least one between
$\mathbf q^*\mathbf D\mathbf t_{-N}$ and
$\mathbf q^*\mathbf D\mathbf t_N$
must be non-zero. In fact, if both were zero vectors,
\eqref{eq:TB_full_row_rank} would imply that
$\mathbf q^*\mathbf D=\mathbf 0$ and, since $\mathbf D$ is
nonsingular, we would find $\mathbf q=\mathbf 0$, which contradicts
the definition of $\mathbf q$.
Under~\eqref{eq:k_support}, the two non-zero columns of $\mathbf A$
satisfy
\begin{equation}
\mathbf A_N
=
-2\pi\mathrm{i}\,
\mathbf D
\begin{bmatrix}
\mathbf t_{-N} & \mathbf t_N
\end{bmatrix}
\operatorname{diag}\left(
\hat K_{-N}^{\mathrm d},
\hat K_N^{\mathrm d}
\right).
\label{eq:AN_TB_factorization}
\end{equation}
Premultiplying~\eqref{eq:AN_TB_factorization} by $\mathbf q^*$
gives
\begin{equation}
\mathbf q^*\mathbf A_N
=
-2\pi\mathrm{i}\,
\begin{bmatrix}
\mathbf q^*\mathbf D\mathbf t_{-N} &
\mathbf q^*\mathbf D\mathbf t_N
\end{bmatrix}
\operatorname{diag}\left(
\hat K_{-N}^{\mathrm d},
\hat K_N^{\mathrm d}
\right).
\label{eq:q_AN_factorization}
\end{equation}
Since $\hat K_{\pm N}^{\mathrm d}\neq0$ by
\eqref{eq:k_support}, the condition that at least one between
$\mathbf q^* \mathbf D \mathbf t_{-N}$ and
$\mathbf q^* \mathbf D \mathbf t_N$ is non-zero yields
$\mathbf q^*\mathbf A_N\neq\mathbf 0$.
\end{proof}
\begin{lemma}
\label{lem:AB_recovery}
Assume~\eqref{eq:support_conditions} and~\eqref{eq:schur_rank_condition} hold. Then,
\begin{equation}
\mathbf q^*\mathbf A\mathbf B\neq\mathbf 0.
\label{eq:qAB_non-zero}
\end{equation}
\end{lemma}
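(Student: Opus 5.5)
The approach is to factor $\mathbf A\mathbf B$ through the boundary structure and then chain Lemmas~\ref{lem:boundary_row} and~\ref{lem:missing_direction}.

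First, expand the product column-by-row. For any matrices, $\mathbf A\mathbf B=\sum_{n\in\mathcal M}\mathbf a_n\,\mathbf B(n,:)$, where $\mathbf a_n$ is the column of $\mathbf A$ indexed by $n$ and $\mathbf B(n,:)$ is the row of $\mathbf B$ indexed by $n$. Under~\eqref{eq:k_support}, the entries in~\eqref{eq:AR_entry} vanish unless $n=\pm N$, so only $\mathbf a_{\pm N}$ survive. Hence
\begin{equation*}
\mathbf A\mathbf B=\mathbf a_{-N}\mathbf B(-N,:)+\mathbf a_{N}\mathbf B(N,:)=\mathbf A_N\mathbf V,
\end{equation*}
with $\mathbf A_N$ as in~\eqref{eq:AN_definition} and $\mathbf V$ as in~\eqref{eq:V_def}. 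The only care needed is the bookkeeping: the mode-indexed columns of $\mathbf A$ must be paired with the mode-indexed rows of $\mathbf B$, both indexed by $\mathcal M$, and in the same order $(-N,N)$ as in $\mathbf A_N$ and $\mathbf V$.

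Next, premultiply by $\mathbf q^*$ to get $\mathbf q^*\mathbf A\mathbf B=\mathbf w^*\mathbf V$, where $\mathbf w^*:=\mathbf q^*\mathbf A_N\in\mathbb C^{1\times 2}$. The hypotheses~\eqref{eq:support_conditions} and~\eqref{eq:schur_rank_condition} are exactly those of Lemma~\ref{lem:missing_direction}, which gives $\mathbf w\neq\mathbf 0$.

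Finally, Lemma~\ref{lem:boundary_row} (valid since $N\ge2$ is implicit in the setting of Lemma~\ref{lem:Rank_BR}) gives $\operatorname{rank}(\mathbf V)=2$. So the two rows of $\mathbf V$ are linearly independent, and the left nullspace of $\mathbf V$ is trivial. Therefore $\mathbf w^*\mathbf V=\mathbf 0$ would force $\mathbf w=\mathbf 0$, a contradiction, and so $\mathbf q^*\mathbf A\mathbf B\neq\mathbf 0$.

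The main obstacle is only the first step: checking that $\mathbf A$ really has no nonzero columns other than those indexed by $\pm N$, and that the factorization $\mathbf A\mathbf B=\mathbf A_N\mathbf V$ is consistent with the paper's mode-indexing convention. After that, the argument is a two-line combination of the preceding lemmas.
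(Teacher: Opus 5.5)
Your proposal is correct and follows the paper's proof essentially line for line. You use~\eqref{eq:k_support} to reduce $\mathbf A\mathbf B$ to $\mathbf A_N\mathbf V$, then combine $\mathbf q^*\mathbf A_N\neq\mathbf 0$ from Lemma~\ref{lem:missing_direction} with the trivial left nullspace of $\mathbf V$ from Lemma~\ref{lem:boundary_row}. Your explicit note that $N\ge2$ is needed to invoke Lemma~\ref{lem:boundary_row} is a small but valid point that the paper leaves implicit: for $N=1$, the support conditions force $\mathbf V=\mathbf 0$.
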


\begin{proof}
Under~\eqref{eq:k_support}, the only non-zero columns of $\mathbf A$
are those indexed by $\pm N$. Therefore, using
\eqref{eq:AN_definition} and \eqref{eq:V_def},
\begin{equation}
\mathbf A\mathbf B
=
\mathbf a_{-N}\mathbf B(-N,:)
+
\mathbf a_N\mathbf B(N,:)
=
\mathbf A_N\mathbf V.
\label{eq:AB_factorization}
\end{equation}
Premultiplying~\eqref{eq:AB_factorization} by $\mathbf q^*$ gives
\begin{equation}
\mathbf q^*\mathbf A\mathbf B
=
\left(\mathbf q^*\mathbf A_N\right)\mathbf V.
\label{eq:qAB_factorization}
\end{equation}
By Lemma~\ref{lem:missing_direction},
$\mathbf q^*\mathbf A_N\neq\mathbf 0$. As $\operatorname{rank}(\mathbf V)=2$ by Lemma~\ref{lem:boundary_row},
$\mathbf V$ has a trivial left nullspace, such that
$
\left(\mathbf q^*\mathbf A_N\right)\mathbf V
\neq\mathbf 0$, which is equivalent to~\eqref{eq:qAB_non-zero}.
\end{proof}

\begin{theorem}
\label{thm:controllability}
Suppose that the density-kernel support
conditions~\eqref{eq:support_conditions} hold and that the skew Schur
nondegeneracy condition~\eqref{eq:schur_rank_condition} is satisfied.
Then, the linearized system~\eqref{eq:state_space} is controllable via
kernel modifications, whereby
\begin{equation}
\operatorname{rank}(\mathcal C)
=
2N.
\label{eq:full_controllability_rank}
\end{equation}
\end{theorem}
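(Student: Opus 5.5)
The plan is to combine Lemma~\ref{lem:Rank_BR} with Lemma~\ref{lem:AB_recovery} through the PBH-style left-nullspace argument on the first two blocks of $\mathcal C$. Since $\mathcal C$ has $2N$ rows, $\operatorname{rank}(\mathcal C)\le 2N$. It therefore suffices to show that the submatrix $[\mathbf B\ \ \mathbf A\mathbf B]$ already has rank $2N$. Its column space is contained in that of $\mathcal C$, so the lower bound transfers.

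First, I invoke Lemma~\ref{lem:Rank_BR}. Under~\eqref{eq:support_conditions} and~\eqref{eq:schur_rank_condition}, it gives $\operatorname{rank}(\mathbf B)=2N-1$. By rank--nullity applied to $\mathbf B^*$, the left nullspace $\{\mathbf w\in\mathbb C^{2N}:\mathbf w^*\mathbf B=\mathbf 0\}$ is exactly one-dimensional. It is therefore spanned by a single non-zero vector $\mathbf q$, which is precisely the vector appearing in Lemmas~\ref{lem:missing_direction} and~\ref{lem:AB_recovery}. This justifies that the hypotheses of those lemmas are consistent: the $\mathbf q$ they use exists and is unique up to scaling.

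Next, suppose for contradiction that $\operatorname{rank}[\mathbf B\ \ \mathbf A\mathbf B]<2N$. Then there is a non-zero $\mathbf w$ with $\mathbf w^*[\mathbf B\ \ \mathbf A\mathbf B]=\mathbf 0$. From $\mathbf w^*\mathbf B=\mathbf 0$ and the one-dimensionality of the left nullspace, $\mathbf w=\lambda\mathbf q$ for some $\lambda\neq0$. Then $\mathbf w^*\mathbf A\mathbf B=\bar\lambda\,\mathbf q^*\mathbf A\mathbf B$, which is non-zero by Lemma~\ref{lem:AB_recovery}. This is a contradiction. Hence $[\mathbf B\ \ \mathbf A\mathbf B]$ has full row rank $2N$, and so does $\mathcal C$, which gives~\eqref{eq:full_controllability_rank}. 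Controllability then follows from the Kalman rank criterion~\cite{Chen1999}.

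The only delicate point is the identification $\mathbf w=\lambda\mathbf q$, which rests entirely on the exact rank $2N-1$ from Lemma~\ref{lem:Rank_BR}. If the rank were lower, the left nullspace would be larger, and Lemma~\ref{lem:AB_recovery} would no longer suffice. The remaining steps are routine linear algebra. Since $N\ge2$ is required by the earlier lemmas, I would state that restriction implicitly through their use.
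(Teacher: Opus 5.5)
Your proof is correct and is essentially the paper's argument: Lemma~\ref{lem:Rank_BR} makes the left nullspace of $\mathbf B$ one-dimensional and spanned by $\mathbf q$, and Lemma~\ref{lem:AB_recovery} then forces $[\mathbf B\ \ \mathbf A\mathbf B]$ to have rank $2N$. The only difference is phrasing: the paper exhibits a column of $\mathbf A\mathbf B$ outside $\operatorname{Col}(\mathbf B)$, while you rule out a non-zero left null vector of $[\mathbf B\ \ \mathbf A\mathbf B]$, which is the same fact in dual form. ``PBH-style'' is a slight misnomer, since no eigenvectors of $\mathbf A$ are involved.
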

\begin{proof}
By Lemma~\ref{lem:Rank_BR},
$\operatorname{rank}(\mathbf B)=2N-1$, and hence its left nullspace is
one-dimensional. Let $\mathbf q$ be the non-zero vector introduced in
Lemma~\ref{lem:missing_direction}, such that
$\mathbf q^*\mathbf B=\mathbf 0$. By
Lemma~\ref{lem:AB_recovery},
$\mathbf q^*\mathbf A\mathbf B\neq\mathbf 0$, which implies that
\begin{equation}
\mathbf q^*\mathbf v_j\neq0,
\label{eq:qvj_nonzero}
\end{equation}
for at least one column $\mathbf{v}_j$ of $\mathbf{A}\mathbf{B}$, $j\in\mathcal M_0$. $\mathbf q^*\mathbf B=\mathbf 0$ requires
$\mathbf q^*\mathbf v=0$ for every
$\mathbf v\in\operatorname{Col}(\mathbf B)$, such that
\eqref{eq:qvj_nonzero} implies 
$\mathbf v_j\notin\operatorname{Col}(\mathbf B)$. Therefore, $\mathbf A\mathbf B$ contains at least one column that is
linearly independent of the columns of $\mathbf B$. Since
$\operatorname{rank}(\mathbf B)=2N-1$ and the state dimension is
$2N$,
\begin{equation}
\operatorname{rank}
\begin{bmatrix}
\mathbf B & \mathbf A\mathbf B
\end{bmatrix}
=
2N,
\label{eq:rank_B_AB_blocks}
\end{equation}
from which~\eqref{eq:full_controllability_rank} follows.
\end{proof}
\section{Results}
\label{numerical_results}

This section validates the proposed interaction modulation framework through a series of numerical inquiries. The simulations evaluate the ability of the proposed state-feedback controller to stabilize the system dynamics, assess its robustness to large perturbations with respect to the equilibrium and neglected higher-order Fourier modes, verify consistency between the reduced mean-field formulation and the corresponding microscopic multi-agent dynamics, and provide numerical support for and beyond the controllability condition established in Theorem~\ref{thm:controllability}. 

\subsection{Equilibrium stabilization}

To demonstrate the feasibility of the proposed interaction modulation framework, we consider two representative equilibrium densities on $\mathbb{S}^{1}$. The first equilibrium is a von Mises distribution represented through
a truncated Fourier expansion, whereas the second is prescribed
directly through a finite Fourier expansion. Together, these examples demonstrate that the proposed controller is applicable to analytically defined equilibrium profiles (which have typically non-sparse Fourier complex coefficients), as well as general density distributions represented in the Fourier basis, possibly reconstructed from desired values of the density at selected points on a spatial mesh.

We first consider a von Mises desired density~\cite{Maffettone}
\begin{equation}
\rho^{\mathrm d}(x)
=
\frac{C e^{\kappa\cos(x-\mu)}}{2\pi I_0(\kappa)},
\label{eq:von_mises}
\end{equation}
where $C$ is the normalization constant, $\mu$ denotes the mean,
$\kappa>0$ is the concentration parameter, and $I_0(\kappa)$ is
the modified Bessel function of the first kind of order zero.
Throughout this example, we set $C=1$, $\mu=1$, and $\kappa=2$. The desired
density is truncated to $|k|=15$ Fourier modes, while we consider $N=16$ in~\eqref{eq:state_space}, so that we set
$\hat{\rho}^{\mathrm d}_{\pm16}=0$. All coefficients $\{-15,\dots,15\}$ of the Fourier series of~\eqref{eq:von_mises} are non-zero, similar to our application-relevant controllability example in~\eqref{eq:support_conditions}. The resulting truncated
representation closely approximates the original von Mises density,
with a maximum pointwise reconstruction error of $\sim10^{-14}$.
To investigate the influence of the equilibrium kernel on the transient
closed-loop behavior, we consider two representative kernels,
\begin{subequations}
\label{eq:kernel}
\begin{align}
K^{\mathrm d}(x)
&=
0.01\cos(16x)+0.02\sin(16x),
\label{eq:kernel_mode16}
\\
K^{\mathrm d}(x)
&=
0.01\cos(x)+0.02\sin(x).
\label{eq:kernel_mode1}
\end{align}
\end{subequations}
The kernel in~\eqref{eq:kernel_mode16} satisfies the modal equilibrium condition
in~\eqref{eq:modal_equilibrium} and represents the exact application-relevant case studied for controllability in Section~\ref{sec:controllability}. Under the chosen parameters, the truncated system~\eqref{eq:state_space} resulting from~\eqref{eq:von_mises} and~\eqref{eq:kernel_mode16} satisfies the hypotheses of Theorem~\ref{thm:controllability}, making the system controllable.
The kernel in~\eqref{eq:kernel_mode1} does not satisfy the modal equilibrium condition in~\eqref{eq:modal_equilibrium}. We include this case to demonstrate that the proposed controller can still
effectively regulate the density when the kernel is not an equilibrium one.
In this case, the linearized dynamics in~\eqref{eq:state_space} contains a
nonzero constant forcing term that depends on the desired equilibrium density and kernel, and requires a nonzero kernel
modification input $\delta \hat{\mathbf{K}}$ at equilibrium.

The feedback controller in~\eqref{eq:state_feedback} is designed from the reduced
linearized model using a continuous-time
LQR, with
$
\mathbf{Q}=10^{3}\,\mathbf{I},
\mathbf{R}=\mathbf{I}.
$
For both kernels in~\eqref{eq:kernel}, the open-loop linearized dynamics is not
asymptotically stable. With kernel~\eqref{eq:kernel_mode16},
all open-loop poles are located at the origin, whereas kernel
~\eqref{eq:kernel_mode1} introduces a complex-conjugate pair with
real part $-0.01$, with the remaining poles at the origin. Under the LQR feedback, all closed-loop poles are shifted into the open negative real part half-plane, confirming asymptotic stability of the linearized closed-loop system.

To initialize the simulation, we discretize the desired density profile using 60 agents. The initial positions of the agents are perturbed using
zero-mean Gaussian independent noise with standard deviation $1$,
wrapped onto the periodic domain. The corresponding perturbed initial density $\rho(x,0)$ is constructed
from the perturbed discrete agents' positions via a periodic Gaussian kernel
density estimator~\cite{Silverman1986} with bandwidth $\sigma=0.5$, and used to find the kernel input $\delta \hat{\mathbf{K}}$ from~\eqref{eq:state_feedback}. The agent positions are then
evolved according to the microscopic dynamics in \eqref{eq:micro},
with the modified interaction kernel~\eqref{eq:controlled_perturbation}. At each time step, the
continuous density $\rho(x,t)$ is constructed from the updated discrete
configuration using the same density estimator. The kernel
is not modified in the uncontrolled case. Given the controlled and uncontrolled density solutions
$\rho_{\mathrm{c}}(x_j,t)$ and $\rho_{\mathrm{uc}}(x_j,t)$ over the uniform spatial grid
$\{x_j\}_{j=1}^{N_x}$ with spacing $\Delta x$, we define errors consistent with~\eqref{eq:error_norm} as
\begin{equation}
e_{i}(t)
=
\left[
\Delta x
\sum_{j=1}^{N_x}
\left|
\rho_{i}(x_j,t)
-
\rho^{\mathrm d}(x_j)
\right|^2
\right]^{1/2},
\label{eq:closed_loop_norm}
\end{equation}
with $i=\mathrm{c}, \mathrm{uc}$.
The density is evaluated on a uniform spatial grid of $N_x=33$ points.

The resulting density distributions and tracking errors are shown in
Fig.~\ref{fig:vonmises_results}. In the uncontrolled case, the density does not converge to the
desired profile, as shown in Fig.~\ref{fig:vonmises_results}(a), and the corresponding $L^2$
density error remains approximately constant. In contrast, the
interaction kernel feedback drives the controlled density toward the
desired von Mises profile, as shown in Fig.~\ref{fig:vonmises_results}(b), while the
corresponding $L^2$ density errors decrease rapidly for both kernels, as shown in
Fig.~\ref{fig:vonmises_results}(c). Since the density is reconstructed from a
finite number of agents using kernel density estimation (KDE), the achievable accuracy is affected by
the finite-agent representation and density reconstruction. These results
demonstrate that modulation of the interaction kernel regulates the
microscopic agent dynamics toward the prescribed desired density.
\begin{figure*}[!t]
\centering

\subfloat[Uncontrolled]{
\includegraphics[width=0.27\textwidth]{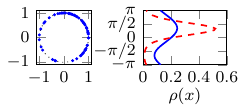}
}
\hfill
\subfloat[Controlled]{
\includegraphics[width=0.27\textwidth]{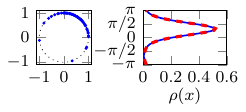}
}
\hfill
\subfloat[Error evolution]{
\includegraphics[width=0.27\textwidth]{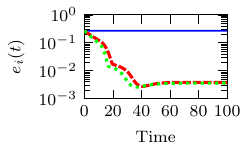}
}

\caption{
Stabilization of the von Mises equilibrium using the discrete-agent model.
(a) Uncontrolled and (b) controlled final agent configurations for the
equilibrium kernel~\eqref{eq:kernel_mode16}, together with the corresponding
density profiles reconstructed using KDE (blue). The red dashed curves denote the desired
density.
(c) Evolution of the $L^2$ density error. The solid blue curve denotes the
uncontrolled case, while the dotted green and dashed red curves indicate the
controlled cases corresponding to the equilibrium
kernels~\eqref{eq:kernel_mode16} and~\eqref{eq:kernel_mode1}, respectively.
}
\label{fig:vonmises_results}
\end{figure*}

While the previous example focuses on the conditions of the controllability theorem in Section~\ref{sec:controllability} that are relevant for most practical applications, our kernel modulation control approach could be applied even for conditions with sparse coefficients that do not satisfy~\eqref{eq:support_conditions} and for relatively large perturbations from the equilibrium. For example, we consider the desired equilibrium 
\begin{equation}
\begin{aligned}
\rho^\mathrm {d}(x)
=
\frac{1}{2\pi}
\Big(
&
1
+0.20\cos x
+0.15\sin x
\\
&
+0.10\sin 2x
+0.12\cos 4x
+0.08\sin 5x
\Big),
\end{aligned}
\label{eq:fourier_density}
\end{equation}
with the equilibrium kernel
\begin{equation}
\begin{aligned}
K^\mathrm {d}(x)
=
&
0.10\cos(3x)
+0.06\sin(3x)
\\
&
+0.08\cos(6x)
+0.04\sin(6x).
\end{aligned}
\label{eq:fourier_kernel}
\end{equation}
For this example, the Fourier dynamics is truncated at $N=6$.
We initialize the simulation to the spatially uniform density
$\rho(x,0)=1/(2\pi)$,
which differs substantially from the prescribed equilibrium.
The controlled and uncontrolled cases are simulated using a finite-volume
discretization of the nonlinear continuum dynamics in
\eqref{eq:pde}--\eqref{eq:velocity}, with a local Lax--Friedrichs
(Rusanov) numerical flux and periodic boundary conditions
\cite{LeVeque2002}. The maximum time step is
$\Delta t_{\max}=0.01$, providing a maximum Courant–Friedrichs–Lewy (CFL) number of 0.45. Figure~\ref{fig:controlled_stabilization}(a) compares the desired, initial, uncontrolled, and controlled density profiles. Starting from the uniform distribution, the uncontrolled dynamics remains far from the prescribed equilibrium throughout the simulation. In contrast, the proposed controller drives the density toward the desired profile.
Figure~\ref{fig:controlled_stabilization}(b) reports the corresponding
error from~\eqref{eq:closed_loop_norm},
evaluated on a uniform spatial grid of $N_x=800$ points.
While the error for the uncontrolled system remains nearly constant,
the error for the controlled system initially shows a slow reduction, followed by a faster decrease by more than one order
of magnitude, as the actual density becomes closer to the desired one. Although the controller is synthesized from a local linearization about the desired equilibrium, this example shows that it may  successfully regulate a large perturbation from the equilibrium, suggesting that the practical basin of attraction of equilibria may be relatively large, as shown in Section~\ref{sec:initial_perturbations}. Together, Figs.~\ref{fig:vonmises_results} and~\ref{fig:controlled_stabilization} demonstrate that interaction modulation can effectively regulate the collective dynamics around a desired equilibrium.

\begin{figure}[!t]
\centering

\subfloat[Density evolution]{
\includegraphics[width=0.40\columnwidth]{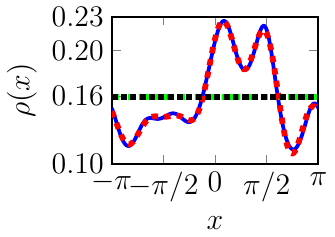}
}
\hfill
\subfloat[Error evolution]{
\includegraphics[width=0.47\columnwidth]{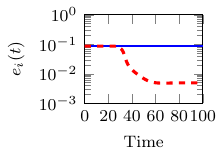}
}

\vspace{-1mm}

\caption{
Stabilization of a sparse desired density.
(a) Desired (red dashed), controlled (blue solid), initial (green dash-dotted), and uncontrolled (black dotted) density profiles.
(b) Evolution of the $L^2$ density error.
The blue solid and red dashed curves correspond to the uncontrolled
and controlled systems, respectively.
}
\label{fig:controlled_stabilization}
\end{figure}
\subsection{Robustness under large initial perturbations and modal truncation}
\label{sec:initial_perturbations}

Next, we evaluate the robustness of the
proposed controller with respect to two deviations from nominal
conditions used for controller synthesis: large initial perturbations
from the desired equilibrium and presence of additional higher-order Fourier modes, neglected in the controller. For both studies, we use the desired density and
equilibrium kernel defined in \eqref{eq:fourier_density} and
\eqref{eq:fourier_kernel}, respectively, and the feedback
controller synthesized from the reduced-order model with truncation
order $N=6$. The controlled nonlinear continuum dynamics in
\eqref{eq:pde}--\eqref{eq:velocity} is simulated using a finite-volume
method on a uniform spatial grid of $N_x=800$ points. The closed-loop
performance is evaluated using the discretized $L^2$ density error
measure in \eqref{eq:closed_loop_norm}.

To examine the effect of the initial perturbation magnitude, we perturb
the Fourier coefficients of the initial density around the desired one as
\begin{equation}
\hat{\rho}_k(0)=\hat{\rho}_k^{\mathrm d}
+\varepsilon\,\Delta\hat{\rho}_k,\qquad k\neq0, 
\label{eq:perturbation_rhok}
\end{equation}
where $\Delta\hat{\rho}_k$ defines a randomly selected perturbation direction (which remains fixed across different error scalings but varies across modes) and
$\varepsilon$ scales its magnitude. Random selection of the perturbation direction is based on the extraction of real and imaginary components from the uniform distribution over $[0,1]$. We consider
$\varepsilon\in\{10^{-1},5\times10^{-1},10^{0}\}$, corresponding to
progressively larger deviations from the desired equilibrium. The corresponding error trajectories are shown
in Fig.~\ref{fig:robustness_results}(a). Despite the different
perturbation amplitudes, all three trajectories exhibit similar
convergence behavior and reach comparable error levels over the
simulated horizon. These results indicate that the proposed feedback
controller may remain effective for a broad range of initial
perturbations.

To evaluate the influence of Fourier mode truncation, the same controller is
applied without redesign, while higher-order perturbations that are not
included in the nominal reduced-order model are introduced into the
initial density. Additional perturbations involving
the mode sets $\{6,7,8\}$ and $\{7,8,9\}$ are considered; the former partially overlaps with the spectrum of the equilibrium kernel in~\eqref{eq:fourier_density}, while the latter is beyond the order of the system. We set these perturbations as in~\eqref{eq:perturbation_rhok} using $\varepsilon=10^{-2}$. The corresponding results are presented in
Fig.~\ref{fig:robustness_results}(b). The nominal case achieves the
smallest residual error, while the additional modal perturbations lead
to higher residual error levels. Nevertheless, the error significantly decreases from
its initial value and remains bounded in both perturbed cases. We do not register strong differences between the two perturbed cases. These
results indicate that the proposed feedback strategy may provide a satisfactory
closed-loop performance in the presence of the considered higher-order
Fourier mode perturbations. We stress that these perturbations are likely not to appear in a practical application of our framework, as they will be typically filtered out by the spatial smoothing effect of KDE.

\begin{figure}[!t]
\centering
\subfloat[Large initial perturbations]{
\includegraphics[width=0.47\columnwidth]{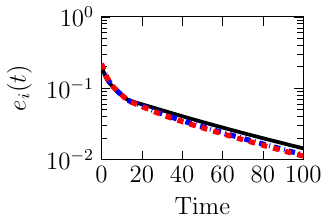}
}
\hfill
\subfloat[Modal truncation]{
\includegraphics[width=0.47\columnwidth]{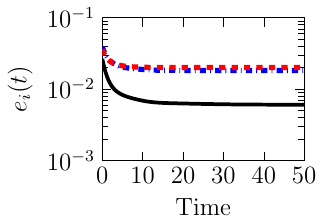}
}

\vspace{-1mm}

\caption{
Robustness of the proposed control approach.
(a) Evolution of the $L^2$ density error norm under three initial
perturbation amplitudes. The solid black, blue dash-dotted, and red
dashed curves correspond to perturbation amplitudes $10^{-1}$,
$5\times10^{-1}$, and $10^{0}$, respectively.
(b) Evolution of the $L^2$ density error norm in the presence of
higher-order modal perturbations. The solid black curve denotes the
nominal case, while the blue dash-dotted and red dashed curves
correspond to perturbations involving the mode sets $\{6,7,8\}$ and
$\{7,8,9\}$, respectively.
}
\label{fig:robustness_results}
\end{figure}

Overall, Fig.~\ref{fig:robustness_results} shows that the proposed
feedback controller may remain effective for the large initial
perturbations and may retain bounded closed-loop density errors in the
presence of higher-order Fourier mode perturbations neglected during controller synthesis. These results
provide numerical evidence of the robustness of the
interaction modulation framework beyond the nominal conditions used
for controller synthesis.
\subsection{Microscopic validation with finite agent populations}

We next evaluate the performance of the proposed control framework with
microscopic multi-agent models, to assess the effect of varying population sizes in relation to the considered continuification approach. We use the
same desired density, equilibrium interaction kernel, and feedback
controller as in Section~\ref{sec:initial_perturbations}.
Simulations are performed for agent populations of
$20,50,100$, and $200$. For each population, the desired density is
discretized into agent positions, and independent zero-mean Gaussian
perturbations with standard deviation $10^{-2}$ are added to generate
the initial configuration. The agent positions are then evolved directly
according to the microscopic dynamics in \eqref{eq:micro}. At each controller evaluation, the density is reconstructed from the
current agent positions using a periodic Gaussian kernel density
estimator with bandwidth $\sigma=0.2$. Its Fourier coefficients
are used to evaluate the feedback law in~\eqref{eq:state_feedback}, and the resulting
modification of the interaction kernel is applied directly to the
microscopic agent dynamics in \eqref{eq:micro}. This process is repeated at each time step.

For performance evaluation, the reconstructed density is evaluated
on a uniform spatial grid of $N_x=800$ points, and its $L^2$ error
relative to the desired density is normalized by its initial value.
Figure~\ref{fig:discrete_validation} shows that the density error
decreases for all considered populations and approaches a
population-dependent residual level. The residual error decreases as
the number of agents increases, reflecting the improved approximation
of the continuum density approximation with larger agent populations.
\begin{figure}[!t]
\centering
\includegraphics[width=0.47\columnwidth]{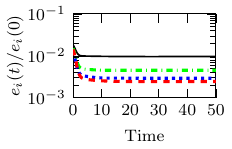}
\caption{Microscopic multi-agent validation for different agent
populations. The curves show the normalized $L^2$ error between the
KDE-reconstructed and desired densities. The solid black, green
dash-dotted, blue dotted, and red dashed curves correspond to
$20$, $50$, $100$, and $200$ agents, respectively.}
\label{fig:discrete_validation}
\end{figure}
These results demonstrate that the feedback controller designed from
the reduced mean-field model remains effective when applied directly
to the microscopic multi-agent dynamics, provided that enough agents are present.
\subsection{Numerical investigation of controllability}

We complement the analytical result of
Theorem~\ref{thm:controllability} with a numerical investigation of
controllability under broader conditions than those in~\eqref{eq:support_conditions}. Specifically, we examine how
the distribution of the retained Fourier modes between the equilibrium
density and the equilibrium kernel affects controllability of the
reduced linearized system.
For this study, we consider $N=10$ and assign each conjugate pair of retained non-zero Fourier modes either to the equilibrium density or to the equilibrium
kernel, consistently with the disjoint-support property implied by
\eqref{eq:modal_equilibrium}. Conditions where both coefficients are zero are neglected, as we expect these conditions to offer worse controllability based on~\eqref{eq:factorized_matrices}. All $2^{10}$
possible mode assignments are considered. For each assignment,
$100$ random realizations of the corresponding non-zero Fourier
coefficients are generated. For each realization, the matrices
$\mathbf A$ and $\mathbf B$ are constructed and the rank of the
controllability matrix in~\eqref{eq:controllability_matrix} is
evaluated. 
Figure~\ref{fig:kernel_mode_success} reports the percentage of controllable equilibria as the
fraction of active interaction kernel modes increases.
When only a small percentage of modes is assigned to the interaction
kernel, the equilibrium density retains a larger fraction of active
Fourier modes and, consequently, richer modal coupling through
$\mathbf B$. Nearly all realizations are controllable in this regime, supporting and extending our conclusions from Theorem~\ref{thm:controllability}.
As additional modes are assigned to the equilibrium kernel, fewer
active density modes remain available in $\mathbf B$, and the
percentage of controllable realizations decreases, as the modes are not achievable through the dynamics via powers of $\mathbf{A}$. In the numerical
experiments, controllable equilibria remain close to $100\%$
for up to $40\%$ active kernel modes, decrease sharply around $50\%$ active
modes, and becomes nearly zero with over $60\%$ active modes. The standard deviation increases near this transition, indicating
greater variability in controllability close to the transition.

\begin{figure}[!t]
\centering
\includegraphics[width=0.47\columnwidth]{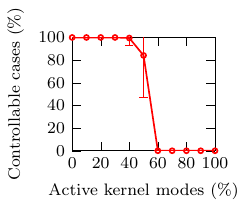}
\caption{Percentage of controllable realizations versus percentage of
active kernel modes ($N=10$). Each point represents the average percentage of controllable realizations over
all mode assignments with the corresponding percentage of active kernel
modes. Error bars indicate one standard deviation across realizations. For each assignment, 100 random realizations are considered.}
\label{fig:kernel_mode_success}
\end{figure}

These results extend the numerical investigation beyond the
conditions considered in
Theorem~\ref{thm:controllability}. They indicate that controllability
is strongly influenced by the allocation of Fourier modes between
equilibrium density and interaction kernel. In particular,
increasing the fraction of active kernel modes reduces the 
support available for the equilibrium density in $\mathbf B$ and can increase its rank
deficiency, requiring the dynamics generated by $\mathbf A$ to recover
additional missing columns. We find that $\mathbf{A}$ cannot recover such a rank deficiency above a certain percentage of inactive density modes.
\section{Conclusion}
\label{sec:results}

This paper introduces a new control approach framework to regulate collective behavior in large-scale,
nonlinear MASs via modulation of the interactions between agents. Motivated by biological
collective systems in which local changes in interaction rules
generate coordinated global behaviors, the proposed
framework treats the interaction law between agents as the control variable,
rather than applying direct control actions to individual agents.
Grounded in a continuified formulation that targets density control of large-scale MASs on a ring, we adopt a Fourier representation of the density dynamics. Under reasonable hypotheses that account for the physical and control features of MASs, we introduce a finite-dimensional approximation of the dynamics. Next, we establish conditions on the interactions for a desired density profile to be an equilibrium; linearize the system around that equilibrium, obtaining interaction modulation as an input; and synthesize a state-feedback controller to retrieve such an input, based on desired properties of the response. Numerical simulations on both microscopic and density dynamics demonstrate that the proposed controller can drive agents' density to a desired distribution, even under large initial perturbations and
presence of neglected Fourier modes. These results highlight the feasibility of our approach for designing interactions between agents to control and stabilize their arrangement.

As we introduce a new approach to density control, questions about controllability naturally arise. Focusing on application-relevant conditions, we establish controllability requirements for the finite-dimensional system approximation, which we expect to be verified in the vast majority of cases of interest for our framework. We corroborate and complement our controllability proof with numerical simulations on controllability of random systems, studying conditions beyond the hypotheses of our theorem. We find that our control framework is mostly feasible for desired densities with non-sparse spectral content, which represent the biggest share of densities targeted in practical control applications.

Our control framework comes with two main limitations. First, while our Fourier approach can be easily extended to two- and three-dimensional periodic domains, it does not allow to achieve non-periodic densities. In future work, we will seek to translate it to non-periodic domains. A promising approach involves the use of a different orthogonal complete functional basis that automatically satisfies Dirichlet boundary conditions of zero density at the domain borders, similar to assumed mode methods in Galerkin approximations\cite{Langtangen2019}. Second, the proposed approach is centralized. We plan to study decentralization approaches based on distributed estimation\cite{DiLorenzo2025} and leader-follower frameworks\cite{LeaderFollower,Plasticity} that leverage asymmetric interactions, keeping followers interactions fixed. Even with these limitations, our work provides the foundations for an innovative and promising approach to density control in MASs.
\section*{Acknowledgements}
We thank Petras Swissler and Sachit Butail for useful discussion about the problem.

\bibliographystyle{IEEEtran}
\bibliography{references}
\begin{IEEEbiography}
[{\includegraphics[width=1in,height=1.5in,clip,keepaspectratio]
{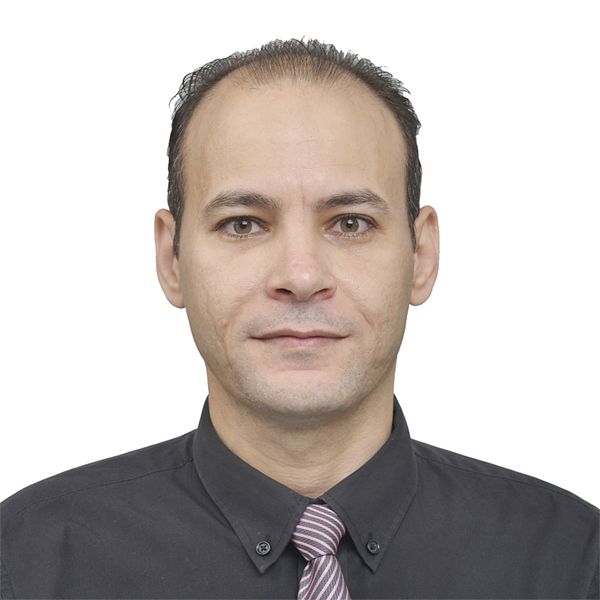}}]
{Mena Zkhary}
received the B.Sc. degree in Industrial Electronics and Control
Engineering and the M.Sc. degree in Engineering Mathematics from
the Faculty of Electronic Engineering, Menoufia University, Egypt,
in 2013 and 2019, respectively. From 2020 to 2025, he was a
Researcher and Assistant Lecturer of Mathematics at Ghent University Global Campus,
South Korea. He is currently pursuing the Ph.D. degree in Mechanical
Engineering at the New York Institute of Technology. His research
interests include modeling and control of dynamical systems,
large-scale multi-agent systems, and collective dynamics.
\end{IEEEbiography}

\begin{IEEEbiography}
[{\includegraphics[width=1in,height=1.5in,clip,keepaspectratio]
{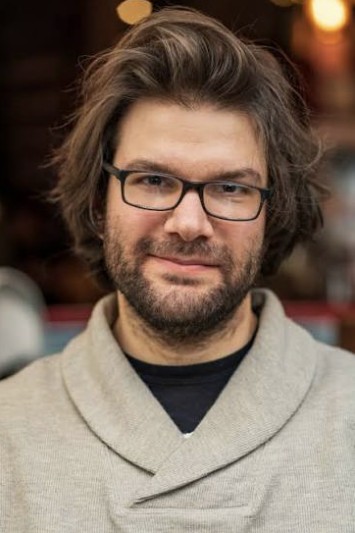}}]
{Alain Boldini}
is an Assistant Professor of Mechanical Engineering at the
New York Institute of Technology, since 2023. He received his B.Sc. in Aerospace Engineering from Politecnico di Milano in 2015, his M.Sc. in
Aerospace Engineering from Politecnico di Milano and Torino
in 2017, and his Ph.D. in Mechanical Engineering from
New York University in 2022. He has been a Postdoctoral Associate at New York University Tandon School of Engineering from 2022 to 2023. His research focuses on complex
systems modeling and control.
\end{IEEEbiography}
\end{document}